\newtheorem{prop}{Proposition}
  \providecommand\BibTeX{{%
    \normalfont B\kern-0.5em{\scshape i\kern-0.25em b}\kern-0.8em\TeX}}}
\begin{document}

%%
%% The "title" command has an optional parameter,
%% allowing the author to define a "short title" to be used in page headers.
\title{Learning Computation Bounds for Branch-and-Bound Algorithms to k-plex Extraction}

%%
%% The "author" command and its associated commands are used to define
%% the authors and their affiliations.
%% Of note is the shared affiliation of the first two authors, and the
%% "authornote" and "authornotemark" commands
%% used to denote shared contribution to the research.
%\author{Ben Trovato}
%\authornote{Both authors contributed equally to this research.}
%\email{trovato@corporation.com}
%\orcid{1234-5678-9012}
%\author{G.K.M. Tobin}
%\authornotemark[1]
%\email{webmaster@marysville-ohio.com}
%\affiliation{%
%  \institution{Institute for Clarity in Documentation}
%  \streetaddress{P.O. Box 1212}
%  \city{Dublin}
%  \state{Ohio}
%  \country{USA}
%  \postcode{43017-6221}
%}

\author{Yun-Ya Huang}
\affiliation{%
  \institution{National Tsing Hua University}
  \city{Hsinchu}
  \country{Taiwan}}
\email{ilcilc1975@gmail.com}

\author{Chih-Ya Shen}
\affiliation{%
  \institution{National Tsing Hua University}
  %\streetaddress{1 Th{\o}rv{\"a}ld Circle}
  \city{Hsinchu}
  \country{Taiwan}}
\email{shenchihya@gmail.com}

%%
%% By default, the full list of authors will be used in the page
%% headers. Often, this list is too long, and will overlap
%% other information printed in the page headers. This command allows
%% the author to define a more concise list
%% of authors' names for this purpose.
%\renewcommand{\shortauthors}{Trovato and Tobin, et al.}

%%
%% The abstract is a short summary of the work to be presented in the
%% article.
\begin{abstract}
  k-plex is a representative definition of communities in networks. While the cliques is too stiff to applicate to real cases, the k-plex relaxes the notion of the clique, allowing each node to miss up to k connections. Although k-plexes are more flexible than cliques, finding them is more challenging as their number is greater. In this paper, we aim to detect the k-plex under the size and time constraints, leveraging the new vision of automated learning bounding strategy. We introduce the constraint learning concept to learn the bound strategy from the branch and bound process and develop it into a Mixed Integer Programming framework. While most of the work is dedicated on learn the branch strategy in branch and bound-based algorithms, we focus on the learn to bound strategy which needs to handle the problem that learned strategy might not examine the feasible solution. We adopted the MILP framework and design a set of variables relative to the k-plex property as our constraint space to learn the strategy. The learn to bound strategy learning the original strategy function also reduces the computation load of the bound process to accelerate the branch and bound algorithm. Note that the learn to bound concept can apply to any branch and bound based algorithm with the appropriate framework. We conduct the experiment on different networks, the results show that our learn to branch and bound method does accelerate the original branch and bound method and outperforms other baselines, while also being able to generalize on different graph properties.
\end{abstract}

%%
%% The code below is generated by the tool at http://dl.acm.org/ccs.cfm.
%% Please copy and paste the code instead of the example below.
%

\begin{CCSXML}
<ccs2012>
<concept>
<concept_id>10002950.10003624.10003633.10010917</concept_id>
<concept_desc>Mathematics of computing~Graph algorithms</concept_desc>
<concept_significance>500</concept_significance>
</concept>
<concept>
<concept_id>10002950.10003624.10003625.10003630</concept_id>
<concept_desc>Mathematics of computing~Combinatorial optimization</concept_desc>
<concept_significance>500</concept_significance>
</concept>
<concept>
<concept_id>10003752.10003790.10003795</concept_id>
<concept_desc>Theory of computation~Constraint and logic programming</concept_desc>
<concept_significance>500</concept_significance>
</concept>
<concept>
<concept_id>10003752.10003809.10003635</concept_id>
<concept_desc>Theory of computation~Graph algorithms analysis</concept_desc>
<concept_significance>500</concept_significance>
</concept>
</ccs2012>
\end{CCSXML}

\ccsdesc[500]{Mathematics of computing~Graph algorithms}
\ccsdesc[500]{Mathematics of computing~Combinatorial optimization}
\ccsdesc[500]{Theory of computation~Constraint and logic programming}
\ccsdesc[500]{Theory of computation~Graph algorithms analysis}
%%
%% Keywords. The author(s) should pick words that accurately describe
%% the work being presented. Separate the keywords with commas.
%\keywords{datasets, neural networks, gaze detection, text tagging}
\keywords{k-plex, constraint Learning, branch and bound, mixed integer linear programming}
%% A "teaser" image appears between the author and affiliation
%% information and the body of the document, and typically spans the
%% page.
%\begin{teaserfigure}
%  \includegraphics[width=\textwidth]{sampleteaser}
%  \caption{Seattle Mariners at Spring Training, 2010.}
%  \Description{Enjoying the baseball game from the third-base
%  seats. Ichiro Suzuki preparing to bat.}
%  \label{fig:teaser}
%\end{teaserfigure}

%%
%% This command processes the author and affiliation and title
%% information and builds the first part of the formatted document.
\maketitle

\section{Introduction}
\noindent Most of the real-world networks represent the relationship in different scenarios. The edge distribution of these networks is not uniform, and it is obvious that the set of nodes highly connected to each other, often called \emph{communities}, is quite different from the other nodes in the network. Detecting the \emph{communities} can help to explore the fundamental properties of the large networks, thus the automated detecting communities is an important issue and has been largely investigated \cite{fortunato2010community}.

A clique is a set of nodes in a network that every two distinct vertices in the set are connected to each other. Since the property of clique is too stiff to apply in practice \cite{pattillo2012clique}, a more appropriate property k-plex is able to represent communities and also suitable for real scenarios. k-plex is each vertex of the induced subgraph is connected to at least m-k other vertices, where m is the number of vertices in the induced subgraph. In other words, each vertex has edges with all the others, with the possible exception of up to k missing neighbors (including itself). Thus, k-plexes are a simple and intuitive generalization of cliques.

The problem of finding k-plex has emerged in social network analysis \cite{balasundaram2011clique}, and it also be applied to some important fields such as employing graph-based data mining \cite{berlowitz2015efficient}, \cite{pattillo2012clique}, \cite{zhai2016fast}. Unfortunately, detecting k-plexes in a network is more difficult than detecting k-cliques since the maximal k-plexes are much more numerous than maximal cliques, and most of the efficient algorithms for computing maximal k-plexes can only be used on small-size graphs and also required a lot of computation. Thus we want to develop an effective algorithm for detecting k-plex with certain quality in a short time.

There are lots of works dedicated to detecting k-plex problem \cite{berlowitz2015efficient}, \cite{conte2017fast}, \cite{zhou2021improving}, \cite{mcclosky2012combinatorial}, \cite{jiang2021new}, \cite{wang2017query}. Part of them use the branch and bound technique and achieves a good performance on finding maximum k-plex \cite{zhou2021improving},\cite{mcclosky2012combinatorial}, \cite{wang2017query}. Although the branch and bound based algorithm successfully work on detecting k-plex, the common issue of the branch and bound algorithm is that it requires a lot of effort to design the heuristic strategy and the bound strategy, and it is annoying to build a strategy that improves the power of the bound conditions also reduces the computation time by handcrafting. In this work, we make the first attempt to use the concept of constraint learning to automated learning bound strategy from the branch and bound process which helps the branch and bound algorithm work effectively and detect the k-plex quickly.
%Note that the learning to bound concept can apply to any branch and bound based algorithm with an appropriate framework.

Using constraint learning to effectively detect k-plex needs to due with the following difficulty: 1) it is required to propose the overall framework and learning flow including deciding which kind of information as an example data and the features (variable) of the examples and what we want the constraint learning to distinguish; 2) the variable chosen need to be relative to the target problem and can be generalized on it; 3) the computation load of the variable need to take into the consideration; 4) the complexity of the process function of variables needs to be moderate that able to distinguish the examples data also not complicate to increase the MILP solvers loading. We design a complete framework that considers all the conditions mentioned above which lead to mining bounding strategy to efficiently detect the k-plex.

%轉移到任何bb問題
Although our work is dedicated to using the learn to bound idea to accelerate the process of k-plex extraction, the concept of constraint learning can apply to different problems with branch and bound based algorithms. Given the target problem and any basic bound strategy, we can perform the same framework we proposed on it with a suitable set of variables. In the branch and bound based algorithm, we want the bound can discard the exploration of infeasible solution precisely, this may require the complex variables that are able to describe the bound behavior well; on the other hand, we want to reduce the computation load of bound strategy, this required the achieving of the variable do not cost a lot of computation. For example, to calculate the neighbor of each vertex in the set need to traverse each vertex and edge, while the size of the set needs no computation. There seems to be a trade-off between the expressive power and the computation load of variables. Although the constraint learning framework can perform the learn to bound technique on different problems also different bound strategies, it requires the observation of the features of the target problem and corresponding bound strategy to design a suitable set of variables, leading to the efficient and powerful learned strategy.

Recently, machine learning has been adopted to solve the combinatorial optimization problems that have been mainly solved by algorithm approaches for the past few decades. Several works tackle the combinatorial optimization problems as a sequence transformation problem, \cite{li2018combinatorial}, \cite{gu2020pointer}. And several works demonstrated the problem can solve in by Reinforcement learning(RL) approach ,\cite{khalil2017learning}, \cite{barrett2020exploratory}, \cite{li2021deep}. They demonstrated RL can effectively solve graph optimization problems, learns to explore the potential combination, and exploit the experience without the requirement of the optimal solution as supervision. Different from the works that put their effort into designing the appropriate machine learning framework and then building the solution with lots of training and computation load. Our method applies constraint learning automatically learning to prevent the branch and bound based algorithm from examining the infeasible solution. We develop a framework that learns to bound for the branch and bound algorithm to accelerate it. 

There are also some works using machine learning to help accelerate the branch and bound algorithm, \cite{balcan2018learning}, \cite{zarpellon2021parameterizing}, \cite{qu2022improved} they put their effort into learning the branch strategy through different machine learning frameworks, and design the architecture of the learning process from MILP solver. The problem of learning branch strategy is relatively simple than learning to bound. The learning to branch reduces the size of the search tree most of the time, it may also fail in some conditions, but it is not a big issue because it only affects the efficiency but does not influence if the algorithm can find a feasible solution. In contrast, the problem of learning to bound needs to be due carefully in case of filtering out feasible solutions, which makes the whole algorithm end up generating solutions with low quality. Although learn to bound are a challenging issue, we consider that an effective bound process has a great potential of reducing the overall search process. If we can correctly bound the infeasible solution at an early stage, it may reduce a large amount of computation of branch and bound decision. This effect might be much more powerful than learn to branch since learn to branch only deal with the problem of clever select strategy, it might lead to a high quality of local solution but not powerful to reduce the overall search space. Thus, we provide a new framework to improve the efficiency of the bound strategy, expecting to accelerate the branch and bound search process.

In this work, we tackle the core problem of the branch and bound algorithm: the bound strategy for efficiently detecting k-plex in the networks. First, we introduce the simple branch and bound algorithm. Then we propose a framework in constraint learning concept to learn the bound strategy with the basic branch and bound algorithm searching process. Then learn to bound algorithm can learn the behavior of the original bound strategy with the branch and bound searching process on the small graphs in a short time, and the learned strategy does accelerate the algorithm with the low computation load in deciding whether examine the current searching states. 

The contributions are summarized as follows.
\begin{itemize}
  \item We present techniques to detect the k-plex of a network under the detecting size and time constraints. Our approach is based on the branch and bound algorithm and accelerates its efficiency with a complete framework of learning the bound strategy.
  
  \item We propose a constraint learning framework that automatically learns the bound strategy with the designing variable to reduce the computation load of deciding whether examine the current states. The learning framework in MILP successfully constructs the bound strategy with the searching process on the small graphs and then able to generalize on the real networks, which leads to improving the efficiency of the branch and bounce base algorithm.
  
  \item Experimental results on multiple benchmark datasets show that our proposed approaches outperform the state-of-the-art baselines. Our method is able to generalize on networks with different properties and perform stable with various parameters.
  
\end{itemize}

The rest of this dissertation is organized as follows. Section 2 will discuss related work on several related field. Section 3 is the algorithm design. We will introduce the proposed algorithm of learning to bound. Section 4 is the experimental result and Section 5 is the conclusion.

\section{Related Works}
\subsection{K-plex Detection}
The problem of detecting k-plex can be expressed in finding the maximum k-plex and enumerating the k-plex in graphs. Recent work like Berlowitz, Cohen, and Kimelfeld et al. produce efficient algorithms for the enumeration of maximal k-plexes and maximal connected k-plexes, the algorithm reduces the problem of enumerating k-plex for an arbitrary graph G to enumerating k-plex for the graph that almost satisfies k-plex \cite{berlowitz2015efficient}. 
Conte et al. are dedicated to reducing the large graph to a relatively small graph by the \emph{coreness} and the \emph{cliquness} property, which makes the existing k-plex enumerating algorithms possible to execute on large graphs \cite {conte2017fast}. 
Zhou et al. develop a branch and bound algorithm Maplex, which efficiently removes redundant vertices and edges, and used the graph color heuristic to obtain a tight upper bound for the exact branch-and-bound \cite {zhou2021improving}. 
Miao et al. propose a greedy randomized adaptive search procedure (GRASP) to overcomes the drawbacks of existing construction heuristics found for smaller values of parameter k are sometimes not found for larger k even though they are feasible. And the propose algorithm is able to detect a maximum k-plex in various networks \cite {miao2017approaches}.
Chen et al. inspired by the multi-armed bandit (MAB) problem in reinforcement learning (RL) to proposed two heuristics, BLP and DTCC, for the maximum k-plex problem. Then develop a local search algorithm named BDCC and improve it by a hyper heuristic strategy \cite {chen2019combining}. 
Although these works provide the method for searching k-plex from different perspectives, detecting k-plex problem still need a tremendous of time. Hence, we concentrated on detecting k-plex in a short time with the lower bound constraints to provide the real-time requirement.

\subsection{Combinatorial Optimization Problems}
In recent years, machine learning has improved the field of solving combinatorial optimization problems.
Khalil et al. automate the process of learning heuristics for the combinatorial optimization problems by a combination of reinforcement learning and graph embedding, successfully applied to optimization problems over graphs \cite {khalil2017learning}.
Li et al. propose the framework combines deep learning and algorithmic ideas, the graph convolution network is trained to estimate if the vertex in the graph is part of the optimal solution, and able to produce multiple solutions following the procedure to explore it \cite {li2018combinatorial}.
Barrett et al. propose ECO-DQN, a reinforcement learning based algorithm for the Max-Cut problem. Instead of the previous work that earlier decisions are not revisable, their agent seeks to continuously improve the solution by learning to explore at test time and generalizes well to unseen graph sizes and structures \cite {barrett2020exploratory}.
Karalias et al. presented an unsupervised learning framework for solving constrained combinatorial problems on graphs that utilizes a neural network to parametrize a probability distribution over sets to guarantee the quality of its solutions. They demonstrate this approach is able to obtain valid solutions to the maximum clique problem and local graph clustering \cite {karalias2020erdos}.
Li et al. propose a deep learning approach to approximately solves the covering salesman problem(CSP). The model captures the structural patterns and forms a dynamic embedding to handle the dynamic patterns of the problem. The model is trained using the reinforce algorithm and shows desirable properties of fast solving speed and the ability to generalize to unseen instances \cite {li2021deep}. 
Unlike the machine learning approaches, Our method uses the constraint learning concept for learning bound strategy which gives a new perspective on solving combinatorial optimization problems.

\subsection{Constraint Learning}
Recent work for mining constraints from data inherited the structure of the inductive logic programming and further expressed the problem to the mixed integer linear problem form, also performing their different synthesis strategy.
Pawlak and Krawiec et al. propose a method that mining constraints from examples labeled feasible and infeasible, and formulates it as mixed-integer linear programming (MILP) to produce constraints in symbolic, human-readable form \cite {pawlak2017automatic}.
Kolb et al. extend the inductive logic programming with the numerical variables to the CNF form, which is conjunctions of clauses over Boolean literals and linear inequalities, the problem learns the SMT(LRA) constraint from the feasible and infeasible set of the example and further proposed the algorithm INACL that can solve the problem by existing SMT solver \cite {kolb2018learning}.
Sched, Kolb, and Teso et al. propose INCALP, an algorithm to learn the hard constraints from data, by encoding constraint learning as a mixed-integer linear program. And if further extend from the work before by considering gradually larger subsets of examples, and terminating as soon as the suitable constraints are found \cite {schede2019learning}.
Meng et al. present the framework mining constraint from data via the given coefficients of the objective function and the corresponding solution, the proposed outer and inner properties successfully identify the constraint from the feasible set of example \cite {meng2021integer}.

Based on these studies, we use the concept of constraint learning and take the mixed integer linear programming as our constraint framework to develop the learn to bound strategy.

\subsection{Learning to Branch and Bound}
The recent works of learning strategy for branch and bound algorithms mostly focus on learning to branch. Balcan et al. aim to learn the branch strategy for branch and bound algorithms, they show how to use machine learning to determine an optimal weighting of any set of partitioning procedures for the instance distribution at a small number of samples \cite{balcan2018learning}. Zarpellon et al. propose a framework of Learn Branching Policies by introducing new input features and the imitation learning framework to generalize on the different problems in MILP architecture \cite{zarpellon2021parameterizing}. Qu et al. propose a reinforcement learning-based branch and bound algorithm, with the demonstration data collected by strong branch rule, the RL agent interacts with the solver with its learned policy that is significantly effective in performance improvement of branch and bound algorithm \cite{qu2022improved}. Rather than learning to branch, we use the learning to bound technique which is much more complicated since it might filter out feasible solutions, and successfully apply it to detecting k-plex.

\section{Problem Definition}
Our work combines constraint learning to learn the bound strategy for detecting k-plex. The problem definition includes detecting k-plex and mining bounding constraints from the search process.
\subsection{k-plex Detecting Problem}
Given a Graph \(G\) = \((V,E)\), a k-plex is set of vertices \(\{\)$v_{1}, v_{2}, v_{3},..., v_{m}$\(\}\) \( \subseteq S \), \((S\subseteq V)\), such that each of them has edges with all the others, and with the possible exception of up to k missing neighbors (including itself). If there exist a k-plex size is \(m\), each of the nodes in \(S\) has at least \(m\)-\(k\) neighbors in \(S\).

Finding maximum k-plex or enumerating all the maximal k-plex in a large network is time consuming. In this work, we define the k-plex detection problem as given the time threshold \(t\) and the lower bound \(lb\) of k-plex size, detecting the k-plex larger or equal than the lower bound within the time threshold. We evaluate the quality by judging if the algorithm can find out the k-plex that satisfied time and size constraints and how large the k-plex it can detect.
\subsection{Constraint Learning}
We first introduce the definition of constraint learning, then illustrate the Constraint Learning in Mixed Integer Programming (MILP) form and introduce the problem of finding constraints of bounding strategy in the branch and bound algorithm.

\subsubsection{Definition}
Given a space of possible instances \(X\), and instances are assignments to a set of variable \emph{Var} belonging to the unknown constraint theory; a space of possible constraints \(C\); an unknown target constraint theory \( T \subseteq  C\); a set of training instances \(E\), whereof positive instances satisfy \(T\), and negatives instances do not satisfy \(T\). Find a constraint theory \( H (H \subseteq  C)\) such that all instances in \(E\) are consistence with \(H\).  A constraint theory \(H\) is consistent with an example e if \(H(e) = label(e)\), here the constraint theory \(H\) can be seen as a function process on example \(e\) which classifies it to positive or negative. If the constraint theory \(H\) classifies the example e same as the original label of e, then \(H\) is consistent with an example \(e\).

\subsubsection{MILP Constraint Learning Problem} 
Recent works use the Mixed Integer Linear Programming framework as the constraint space. Let the example in examples set \(E\) is \(e\), the example set \(E\) consists of the positive instances set $e^{+}$ and the negative set  $e^{-}$. The example \(e\) = \(\{\)$x_{1}$, $x_{2}$, ..., $x_{n}$\(\}\), the $x_{1}$, $x_{2}$, ..., $x_{n}$ are process variables. The \(term\) $t_{j}$  is the function of process variables. Constraint $C_{i}$ is expressed in Equation \ref{algo:constraint_form}. The $w_{ij}$ $\in$ \(\mathbb{R}\) is weight of term $t_{j}$ of constraint $C_{i}$, and the $c_{i}$ $\in$ \(\mathbb{R}\) is a free term.
\begin{equation}\label{algo:constraint_form}
\sum_{j}{w_{ij}t_{j}\leq c_{i}}
\end{equation}
The MILP Constraint Learning Problem is built from the example set \(E\), and the constraint space \(C\). The solution to the problem is to find out a set of constraints \(C\) build from the weight $w_{ij}$ and free term $c_{i}$ such that the example set \(E\) consistence with it. Expressed in equation form, the constraints should satisfied the Equation \ref{algo:constraint_form positive} and Equation \ref{algo:constraint_form negative},where \(\epsilon\) is a small positive constant.

\begin{equation}\label{algo:constraint_form positive}
\sum_{j}{w_{ij}t_{j}(e)\leq c_{i}} \begin{matrix*} & \forall{e \in e^{+}},& \forall{C_{i} \in C} \end{matrix*}
\end{equation}
\begin{equation}\label{algo:constraint_form negative}
\sum_{j}{w_{ij}t_{j}(e) > c_{i}} + \epsilon \begin{matrix*} & \forall{e \in e^{-}},& \exists{C_{i} \in C} \end{matrix*}
\end{equation}

\subsection{Learning constraints of Bound Strategy}
Based on the above constraint learning definition, we propose a problem of learning bounding constraints in branch and bound algorithm for searching k-plex. Given the searching states in the branch and bound process, each state consists of the features of the intermediate solution set $V_{S}$ and the remaining set of candidate vertices $V_{A}$, and each state is labeled with stop examine (negative instances) or not keep examine (positive instances) according to the original branch and bound algorithm decide to explore it or not, we aim to find out a constraint theory \(H\) of with the specific variable \emph{Var} such that the given searching states (instances) are consist with \(H\). In 
Proposition \ref{prop:1}, we prove that we can always learned the constraints exactly distinguishes all the possible example. For applying to the detecting k-plex problem, it is not reasonable
to use lots of variables to find the exact constraint, our goal is to learn the bound strategy to be efficient. We expect the learned constraint has the power of the original bound strategy with a lower computation cost. This is done by design variables that have the certain power to express the bound instances without heavy computation cost, which leads to accelerating the branch and bound algorithm with a light constraint computation.

\begin{prop}
\label{prop:1}
Given the target constraint in the linear program formula $\sum_{j}{a_{ij}t_{j}(e)\leq b_{i}}$, it is always possible to find the leaned constraints such that correctly distinguish every example generate from target constraint under the the constraints space include the constraint space of target constraint.
\end{prop}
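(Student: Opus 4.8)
The plan is to exhibit an explicit witness rather than to argue about the hypothesis space abstractly: the learned theory $H$ will be (a rescaled copy of) the target constraint theory itself. Since the hypothesis class $C$ is assumed to contain the constraint space of the target theory, the target constraints $\sum_j a_{ij}t_j(e)\le b_i$ — and, crucially, any positive scalar multiple $\sum_j (M a_{ij})t_j(e)\le M b_i$ of them, which defines the same feasible region — are themselves admissible choices for the terms and weights of $H$. So the whole argument reduces to checking that one such admissible $H$ satisfies the consistency conditions of Equations \ref{algo:constraint_form positive}--\ref{algo:constraint_form negative} on the generated example set $E$.

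First I would dispatch the positive side. By the way the examples are generated, an example $e\in e^{+}$ is precisely one for which every target constraint holds, i.e.\ $\sum_j a_{ij}t_j(e)\le b_i$ for all $i$; multiplying through by $M>0$ preserves each such inequality, so every positive example satisfies every constraint of $H$, which is exactly Equation \ref{algo:constraint_form positive}. Next I would handle the negative side, which is the only place the $\epsilon$-margin matters. An example $e\in e^{-}$ violates the target theory, so there is an index $i$ with $\sum_j a_{ij}t_j(e) > b_i$; write the strictly positive slack as $\delta_e = \sum_j a_{ij}t_j(e) - b_i > 0$. For the rescaled constraint this slack becomes $\sum_j (M a_{ij})t_j(e) - M b_i = M\delta_e$, so Equation \ref{algo:constraint_form negative} is met as soon as $M\delta_e > \epsilon$. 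Choosing $M > \epsilon/\delta_{\min}$ with $\delta_{\min} = \min_{e\in e^{-}}\delta_e$ works for all negative examples simultaneously, and $\delta_{\min}>0$ because $E$ is finite and each negative example has strictly positive slack against at least one target constraint. This single $M$ produces an $H\in C$ consistent with all of $E$, which is what Proposition \ref{prop:1} asserts.

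The main obstacle — and essentially the only nontrivial point — is ensuring that the negative-side violations cannot be driven arbitrarily close to zero, since otherwise no finite rescaling clears the fixed margin $\epsilon$. I would resolve this exactly by the finiteness of the training set $E$: a finite sample of examples drawn according to the target theory has a strictly positive minimum violation, after which the uniform rescaling step above closes the proof. I would also remark that when $\epsilon < \delta_{\min}$ no rescaling is needed at all and $H$ can be taken to be the target theory verbatim; and that the statement extends to an infinite example space provided the negative examples are separated from the target boundary by a fixed margin, with the same scaling argument applying unchanged.
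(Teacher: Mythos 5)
Your proof is correct, and it takes a genuinely different — and in fact tighter — route than the paper's. The paper argues by contradiction: it invokes the definition of LP infeasibility and says that if no consistent hypothesis existed, the MILP encoding of the examples would be infeasible, which contradicts the fact that the target formula $\sum_j a_{ij}t_j(e)\le b_i$ exists; implicitly this is asserting that the target theory is a feasible point of the learning program, but it never exhibits that point or checks that it actually satisfies the encoding's constraints. You make that witness explicit, and in doing so you catch the one genuinely delicate issue the paper's proof skips: the negative-example condition in Equation \ref{algo:constraint_form negative} demands a strict violation by a fixed margin $\epsilon$, and a negative example could violate the target constraint by less than $\epsilon$, in which case the target theory verbatim is \emph{not} a feasible point of the encoding. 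Your rescaling by $M>\epsilon/\delta_{\min}$, justified by the finiteness of $E$, repairs exactly this gap, so your argument is the one that actually establishes the proposition as the encoding is written. One caveat worth flagging: the paper's encoding later bounds $w_{ij}, c_i \in [-1000,1000]$ (Equation \ref{algo:wij}), so an unbounded rescaling may leave the admissible hypothesis space; since the proposition's hypothesis only requires that the learner's constraint space contain the target's, and the paper's own proof ignores these bounds too, this does not undermine your argument, but it would be worth a remark that $M$ must also respect any box constraints on the weights (or that $\epsilon$ must be taken small relative to $\delta_{\min}$).
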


\begin{proof}
 By the definition that the linear program problem is infeasible if the feasible region is empty(there's no point satisfy all the constraints in the linear program problem) \cite{goemans1994linear}, \cite{mehlhorn2013linear}. Proof by contradiction, if there does not exist the solution correctly distinguishes every example generated from the target constraint, the linear program problem formed by the encoding of example generate from target constraints is infeasible, which means there does not exist a linear program formula $\sum_{j}{a_{ij}t_{j}(e)\leq b_{i}}$ satisfy the encoding of examples, which arrive to a contradiction.

\end{proof}
\section{Methodology}

Our approach aims to find a k-plex that satisfied the size constraints in a short time, to accomplish the goal, we want to let the computation of bound decision in the branch and bound process efficient and still retain the bound effect. We introduce the approach of learning bound strategy by constraint learning to accelerate the bound process. Note that our learn to bound mechanism can be applied to any bound strategy with the appropriate learning variables.

Our framework consists of the prepossessing of the input graph, the basic branch and bound algorithm, the learning process of learning constraints of bound strategy, and the overall learn to branch and bound algorithm.

\subsection{Preprocessing}
We adopt the pruning technique according to the \emph{Coreness} property and the \emph{Cliquness} property of k-plex from \cite{conte2017fast}. The \emph{Coreness} property indicated that for any k-plex size lager than \(m\), the neighbors of each vertex in the k-plex should larger than \(m\) - \(k\). Hence, before searching the k-plex larger than \(lb\) in the input Graph \(G\), we can discard the vertex whose degree is less than \(lb\) - \(k\) and also discard the edge connect to it. After the \emph{Coreness} pruning, we can move on to the \emph{Cliquness} property. The \emph{Cliquness} property indicated that for any vertex in the k-plex whose size is larger than \(m\), the vertex should exist in the clique whose size is larger or equal than \(\lceil m/k \rceil\), so we can discard the vertex not exist in a clique which size is larger or equal than \(\lceil lb/k \rceil\). According to the given value of \(lb\), we use these two pruning techniques to produce a new graph \(G^{'}\) as the preprocessing for the following search.

\subsection{Basic Branch and Bound Algorithm}
Here we use the simple branch and bound algorithm as our agent for collecting each search state for learning the bound strategy. 

We apply a simple heuristic for the branch strategy, here the $V_{S}$ and $V_{A}$ are the intermediate solution set and the remaining set of candidate vertices, respectively. As shown in Algorithm \ref{alg:4}, if the intermediate solution set  $V_{S}$ is null, we choose the vertex \(u\) to branch which the value of Equation \ref{algo:u} is max in the $V_{A}$; if the $V_{S}$ is not null, we choose the vertex \(u\) in $V_{A}$ which \(u\) has the most neighbor in the $V_{S}$ to branch.
\begin{equation}\label{algo:u}
\frac{1}{|u's \,neighbor|}\sum_{v}|v's \,neighboer| \begin{matrix*} & \forall{v \in u's neighbor}\end{matrix*}
\end{equation}

We adopt the idea of familiarity pruning strategy from \cite{yang2012socio}, this work is aim to find out the group with exact size \(p\), and each vertex on average can share no edge with at most \(k\) other vertices in query vertex \(F\). Therefore the familiarity pruning calculates the upper bound of every possible solution growing from $V_{S}$, if the upper bound indicated that on average each vertex cannot have the neighbor at least $p-k$, this state will be bound (Equation \ref{algo:familarity}). To adopt the equation calculating for the fixed p to our k-plex searching, we test the equation for \(p\) = \([ min \,(\,lb \,, \,|V_{S}|\, )\,,\,ub\, ]\) (\( ub \) is the max size of the solution can achieved), if for any \(p\) in this scope can not achieve a possible solution, the search of this state will be terminated. Here the \(N_{v}^{V_{S}}\) is the neighbors of v which the neighbor is in set \(V_{S}\); \(N_{v}^{V_{A}}\) is the neighbors of v which the neighbor is in set \(V_{A}\); InterEdge is the set of edges connect two vertexes in \(V_{S}\) and \(V_{A}\).

\begin{equation}\label{algo:familarity}
    \begin{aligned}
       \frac{1}{p} [\sum_{v \in V_{S}} |N_{v}^{V_{S}|} 
       + (p-|V_{S}|) 
       \max{{v \in V_{A}}}
       |N_{v}^{V_{A}}| + \\
       2\sum_{v \in V_{S}} 
       |InterEdge(v)|] 
       <(p-k-1) 
    \end{aligned}
\end{equation}

\begin{algorithm}[h]
	\caption{Basic Branch and Bound Algorithm}
	\label{alg:4}
        \begin{algorithmic}[] 
            \STATE \textbf{Input:} Graph G, k, lb
            \STATE \textbf{Output:} a k-plex solution
        \end{algorithmic}

	\begin{algorithmic}[1]
    	\STATE $G^{'} \gets$ Preprocessing$(G,k,lb)$ \
    	\STATE $V_{S}=\emptyset, V_{A}=G^{'}(V), Sol=\emptyset $ \
    	\STATE Basic Branch-and-Bound$(G^{'},V_{S},V_{A},k,lb)$ 
    	\RETURN largest solution in $Sol$
    	\newline
	\end{algorithmic}
	
	%\begin{algorithmic}[1]
	%\Prosedure{Basic Branch and Bound}{a}
    %      \State body
    %\endProcedure
    %\end{algorithmic}
    \begin{algorithmic}[] 
        \STATE Basic Branch-and-Bound $(G^{'},V_{S},V_{A},k,lb)$ 
    \end{algorithmic}   

	\begin{algorithmic}[1]
	    \IF{$|V_{S}| \geq lb $ and  $V_{S}$ is k-plex}
	        \STATE Add $V_{S}$ to solution set $Sol$
	    \ENDIF
	    
	    \WHILE {$V_{A} \neq \emptyset$}
    	    \IF{$V_{S} = \emptyset$}
    	        %\STATE $u \gets \max \{Equation \, \ref{algo:u}(V_{A})\}$
    	        \STATE $u$ $\gets$ vertex has max value of Equation \ref{algo:u} in \(V_{A}\)
    	    \ELSE
    	        \STATE $u$ $\gets$ vertex has most neighbors in  \(V_{S}\)
    	    \ENDIF
    	    \STATE $V_{S}=V_{S} \cup u$
    	    \STATE $V_{A}=V_{A} \setminus u$
    	    \IF{PruningBasic(\(V_{S}\),\(V_{A}\)) is $true$}
    	        \STATE Record the pruning data:\\ \{Current state's features , False\} 
    		    \RETURN
    		\ELSE
    		    \STATE Record the not pruning data:\\ \{Current state's features , True\}
    		    \STATE Basic Branch-and-Bound$(G,V_{S},V_{A},k,lb)$
    		\ENDIF
	    \ENDWHILE
	    \RETURN
	\end{algorithmic}
\end{algorithm}

\subsection{Learning Constraints of Bound Strategy}
Constraint Learning learns the constraints from the given examples, we employ the idea of mining constraints from data to mining the bound strategy of the branch and bound process, hence we take each search state and its corresponding features as examples, and try to figure out the effective new bound strategy.
\subsubsection{Variable Chosen}
According to the equation in the familiarity pruning in the basic branch and bound algorithm, the calculation required the computation of the set of edges connecting any two vertices in $V_{S}$, the set of edges connecting any two vertices selected from $V_{A}$, and the set of edges connecting any two vertices in $V_{S}$  and the vertices selected from $V_{A}$. The edge computation is required to visit all the vertex and its neighbors both $V_{S}$ and $V_{A}$. We want to reduce these computations and retain the power of familiarity bound. We design a set of variables \emph{Var} to learn the constraints, described in Table 1, and these \emph{Var} generalize the original bounding strategy well and also reduce computation cost.

At the beginning of the search process, the size of $V_{A}$ tends to be large, to calculate the edges connecting any two vertices in $V_{A}$  is time consuming, we come up with using the variable of the average degree of the graph $G^{'}$ and the maximum degree of the vertex in the Graph $G^{'}$ to replace the $V_{A}$'s edges computation. These two variables only need to calculate one time in the whole algorithm, if we can construct the bounding strategy using these set of variables, the computation cost of each state can be significantly reduced.
\begin{table}
  \caption{Variable for Learning Bounding Constraints}
  \label{tab:var}
  \resizebox{\columnwidth}{!}{%
    \def\arraystretch{1.2}
  \begin{tabular}{cc}
    \toprule
    Variable & Description\\
    \midrule
    \texttt{lb}& lower bound \(lb\) of k-plex. \\ 
    \texttt{ub} &size of Graph $G^{'}$ after prepossessing. \\ 
    \texttt{k}&the k of k-plex \\ 
    \texttt{|$V_{S}$|} & size of the intermediate solution $V_{S}$.\\ 
    \texttt{$N_{V_{S}^{Max}}$}& max neighbors number of vertex in $V_{S}$. \\ 
    \texttt{$\sum$ $N_{V_{S}}$}& summation of vertex neighbors in $V_{S}$. \\ 
    \texttt{|$V_{A}$|}&size of candidate vertices in $V_{A}$ \\ 
    \texttt{|InterEdge|}& size of edges connect between $V_{S}$, $V_{A}$. \\ 
    \texttt{Avg. deg of $G^{'}$}& average degree of the vertices in $G^{'}$. \\ 
    \texttt{Max deg of $G^{'}$}&max degree of the vertex in $G^{'}$ .  \\
    \bottomrule
  \end{tabular}}
\end{table}

\subsubsection{MILP Framework}
Recall the MILP problem definition in Equation \ref{algo:constraint_form positive} and Equation \ref{algo:constraint_form negative}, the Figure \ref{fig:example} is the illustrate example of the MILP Constraint Learning. For all the example \(e\) $\in$ \(e^{+}\) should satisfy all the constraints in constraints model. While the example \(e\) $\in$ \(e^{-}\) should at least violate one constraint in the constraint model. In the examples set in Figure \ref{fig:example}, the two feasible examples satisfy all constraints in the constraint model; the first infeasible example violates the second constraint only, the second infeasible example violates the first and third constraints, the third infeasible example violates all three constraints. So, the constraint model in the figure is consistent with the examples set.

\begin{figure}[h]
\centering
\includegraphics[width=8.5cm]{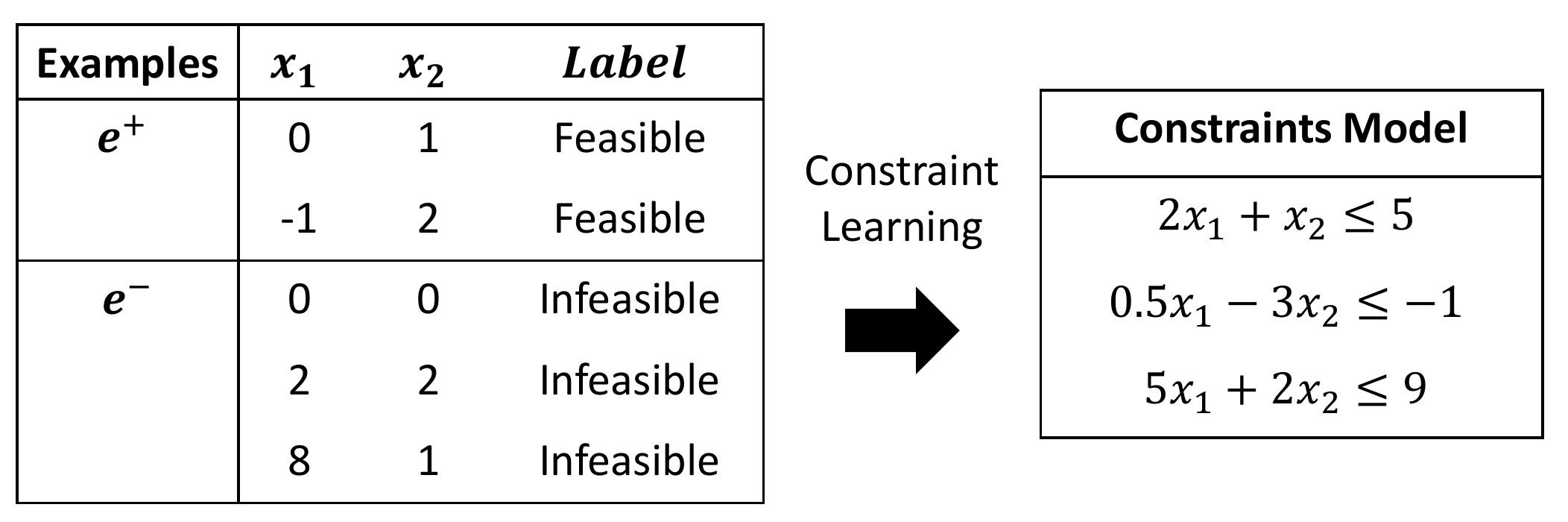}
\caption{Example of Constraint Learning}
\label{fig:example}
\end{figure}

For the application of learn the bound strategy, we set the constraint space has only one MILP formula, so the MILP constraint consistency equation can be simplify to Equation \ref{algo:constraint_form_sim positive} and Equation \ref{algo:constraint_form_sim negative}.
\begin{equation}\label{algo:constraint_form_sim positive}
\sum_{j}{w_{j}t_{j}(e)\leq c_{0} \begin{matrix*} & \forall{e \in e^{+}} \end{matrix*}}
\end{equation}
\begin{equation}\label{algo:constraint_form_sim negative}
\sum_{j}{w_{j}t_{j}(e) > c_{0}} + \epsilon \begin{matrix*} & \forall{e \in e^{-}} \end{matrix*}
\end{equation}
\(term\) \(t\) is the function of the process variable, here we set the process function \(term\) \(t\) is \(quadratic\) function, which means the \(t_{j}\) is build from the each variable itself and the product of the two variable in the set of variables. For the instances \(x=\{\)$x_{1},x_{2},...,x_{n}$ \(\}\), the \(t(x)\) is Equation \ref{algo:tx}
\begin{equation}\label{algo:tx}
    \begin{aligned}
t(x)= \{ x_{1},x_{2},...,x_{n},\\
x_{1}x_{1},x_{1}x_{2},...,x_{1}x_{n},\\
x_{2}x_{2},x_{2}x_{3},...,x_{2}x_{n},\\
...,\\
x_{n}x_{n} \} 
    \end{aligned}
\end{equation}
\subsubsection{Examples Extraction and Encoding}
Since the variable of our constraint learning is already decided, we compute and record the features (variable) of each search state, (the state is consist of current VS and VA), and record its label. If the familiarity pruning of basic branch and bound algorithm decide to keep searching from the state, the state is labeled with positive; otherwise, the state is labeled with negatives.

To learn the constraints from the data we extract from the basic branch and bound process, we need to encode these examples to a MILP formula so that the MILP solver can solve the problem. The encoding technique is adopted from \cite{pawlak2017automatic}, here we list the essential part of the encoding below, where \(l\) is an index of example. Equation \ref{algo:sli} indicated that the negative example should violate at least one constraint in constraint space, \(M\) is a large positive integer set to \(10^{6}\), $\epsilon$ is a small positive value set to \(10^{-6}\).
\begin{equation}\label{algo:constraint_form_encode positive}
\sum_{j}{w_{ij}t_{j}(e)\leq c_{i} \begin{matrix*} & \forall{e \in e^{+}}, &\forall{i} \end{matrix*}}
\end{equation}
\begin{equation}\label{algo:constraint_form_encode negative}
\sum_{j}{w_{ij}t_{j}(e) > M*S_{li} -M + c_{i}} + \epsilon \begin{matrix*} & \forall{e \in e^{-}}, &\forall{i} \end{matrix*}
\end{equation}
\begin{equation}\label{algo:sli}
\sum_{i}{S_{li} \geq 1 \begin{matrix*} & \forall{e \in e^{-}} \end{matrix*}} 
\end{equation}
\begin{equation}\label{algo:wij}
\begin{aligned}
w_{ij} \in [-1000,1000]\\
c_{i} \in [-1000,1000]\\
S_{li}\in \{0,1\}
\end{aligned}
\end{equation}

\subsection{Learn to Branch and Bound Algorithm}
Using the mechanism described above, our learn-to-bound algorithm is illustrated in Algorithm \ref{alg:learn}. First, we use the basic branch and bound algorithm to find k-plex on the random graph. During the searching process, we record each searching state with its corresponding feature and label. Then we use the MILP Constraint learning to find out the bounding strategy according to our design variable. While we obtain the constraint term from the constraint learning. We use this constraints term as our new decision strategy to decide whether keep searching from the current state.

\begin{algorithm}[]
\caption{Learn to Branch and Bound}
\label{alg:learn}
\begin{algorithmic}[] 
    \STATE \textbf{Input:} Graph G, k, lb
    \STATE \textbf{Output:} a k-plex solution
\end{algorithmic}

\begin{algorithmic}[1]
    \STATE Prepare \,other \,training \,graph \,\(G_{t}\) different \newline \,from \,the \,target \,Graph\ \(G\)
    \STATE $Data = \emptyset$ \
    \FOR{different lb, k}
        \STATE $G_{t}^{'} \gets$ Preprocessing$(G_{t},k,lb)$ \
        \STATE $V_{S}=\emptyset, V_{A}=G^{'}(V) $ \
        \STATE $Data$  $\gets$ $Data$ $\cup$ \newline Basic Branch-and-Bound $(G_{t}^{'},V_{S},V_{A},k,lb)$
    \ENDFOR
    \STATE $Bound Strategy$ $\gets$ \newline Learn Constraints(Data, Var, Constraint Space)
    \newline
    \STATE $G^{'} \gets$ Preprocessing$(G,k,lb)$ \
    \STATE $V_{S}=\emptyset, V_{A}=G^{'}(V), Sol=\emptyset $ \
    \STATE Learn-to-Bound$(G^{'},V_{S},V_{A},k,lb)$ 
    \newline
\end{algorithmic}
\begin{algorithmic}[]
   \STATE Learn-to-Bound $(G^{'},V_{S},V_{A},k,lb)$ 
\end{algorithmic}   

\begin{algorithmic}[1]
    \IF{$|V_{S}| \geq lb $ and  $V_{S}$ is k-plex}
        \STATE Add $V_{S}$ to solution set $Sol$
    \ENDIF
    
    \WHILE {$V_{A} \neq \emptyset$}
        \IF{$V_{S} = \emptyset$}
            %\STATE $u \gets \max \{Equation \, \ref{algo:u}(V_{A})\}$
            \STATE $u$ $\gets$ vertex has max value of Equation \ref{algo:u} in \(V_{A}\)
        \ELSE
            \STATE $u$ $\gets$ vertex has most neighbors in  \(V_{S}\)
        \ENDIF
        \STATE $V_{S}=V_{S} \cup u$
        \STATE $V_{A}=V_{A} \setminus u$
        \IF{$BoundStrategy(State's Features)$ is $true$}
            \RETURN
        \ELSE
            \STATE Learn-to-Bound$(G,V_{S},V_{A},k,lb)$
        \ENDIF
    \ENDWHILE
    \RETURN
\end{algorithmic}
\end{algorithm}

\section{Experimental Results}
\label{sec:experiment}
 
\subsection{Experiment Setting}
\label{sec:exp_setup}
All algorithms are written in C++, except the constraint learning framework is written in python and solved by MILP solver Gurobi;  \textbf{S2V-DQN} written in python.  All the experiments are conducted on a desktop with an operating system in Intel(R) Core(TM) i9-10900F CPU @ 2.80GHz.

\subsubsection{Baseline}
For the comparing to our method, we use the recent baseline algorithm \textbf{EnumFast} \cite{conte2017fast}, \textbf{Enum} \cite{berlowitz2015efficient},  \textbf{Maplex} \cite{zhou2021improving}, \textbf{S2V-DQN} \cite{khalil2017learning}, \textbf{Basic BnB}: our basic branch and bound algorithm, compare to \textbf{Learn BnB}: our learn to bound algorithm.

The Enum algorithm first extends the subset from the first input node by scanning all the nodes once and adding the vertices if the subset remains k-plex. Then it extends the possible combination from the first k-plex by reducing the problem of enumerating k-plex for an arbitrary graph G to enumerating k-plex for the graph that almost satisfies k-plex. The FastEnum adopted the same strategy as Enum but propose two preprocessing techniques to prune the infeasible vertices which make the searching more efficient. The Maplex is the branch and bound based algorithm, it proposes the graph color bound to and other reduction techniques to detecting k-plex. We also take a representative reinforcement learning method S2V-DQN framework and modify its termination condition as if the subset still a k-plex or not for detecting k-plex.

\subsubsection{Datasets}
We conduct the experiment on \emph{DBLP}, \emph{web-arabic-2005} and four collaboration networks: \emph{ca-Erdos992}, \emph{ca-GrQc}, \emph{ca-Condmat} and \emph{ca-AstroPh} to test the algorithm performance with the different graph properties. The properties of these four graphs are shown in table \ref{tab:ca}. We also test the Basic BnB and  Learn BnB on several small networks \cite{nr}.

\begin{table}
  \caption{Properties of the collaboration networks.}
  \label{tab:ca}
  \begin{tabular}{ccccc}
    \toprule
    Dataset &$|V|$ & $|E|$ &avg. degree &max degree\\
    \midrule
    ca-Erdos992 & 6100 & 7515 & 2 &61 \\
    ca-GrQc & 5242 & 28980& 5& 81 \\
    ca-Condmat & 23133 & 186936 & 8 & 279 \\
    ca-AstroPh & 18772& 396160 & 21 & 504 \\
    \bottomrule
  \end{tabular}
\end{table}
\subsubsection{Constraint Learning Detail} 
Here we illustrate the detail of the learning constraint algorithm. Since learning the bound strategy can process offline. We only need to learn the bound strategy ones and we can use the constraints we learn to perform the branch and bound algorithm on all the datasets.

To make the learning constraints can be generalized on different graph properties, it would be better to prepare the multiple graph data with different parameters for the constraint learning. But considering our first intention of accelerating the search process of detecting k-plex, we wonder if we can only use the light datasets and the small number of parameters to build the constraints. Under this intention we only use the random graph whose size is \{100, 150, 200, 250\}, and each size has two graphs, and the k is set to \{2, 4\}, the lower bound \(lb\) is set to 5 only. For each graph with the corresponding k, we only run the basic branch and bound algorithm in 60 seconds, and collect the example during these training processes. After the example extraction, we encode the examples to MILP format to formulate a MILP problem, then use the solver Gurobi to solve the problem in 300 seconds. The solution that comes up by the MILP solver is our learn to bound strategy. The total time for the learning constraints of the bound strategy is in 21 minutes. Note that we only need to learn the bound strategy once, then it can apply to the experiment of different networks.

\subsubsection{Setting} 
For all the baseline algorithms (expect the \textbf{Enum}) and our \textbf{Basic BnB}, \textbf{Learn BnB}, we first conduct the preprocessing on the original graph according to the input \(lb\); for the \textbf{Maplex}, we further let it conduct its preprocessing algorithm after it, and we do not count this extra preprocessing time.

For the experiment on the collaboration networks, since the network size is small, we only test the timeout in 10 seconds and 60 seconds. Corresponding to the size of the k-plex each graph can detect, we set the size lower bound \(lb\) in [5, 10, 20, 30, 40] and test on k= \{2, 4\}.
For the experiment on the \emph{DBLP} and \emph{web-arbric-2005}, we set the k in \{2, 4, 8, 16, 32 \}, to see how the different values of k influence the algorithm performance; the lower bound \(lb\) is set to \{30, 50, 75, 100 \} to evaluate if the algorithm can detect k-plex within the time limit. The time limit is set to 10 seconds,60 seconds, and 300 seconds.

\subsection{Performance Evaluation}

First, we conduct the experiment on \emph{DBLP} network, \emph{DBLP} has 317,080 vertices, 1,049,866 edges. the result is in Table \ref{tab:dblp}. The results show our Learn BnB algorithm can detect the k-plex satisfy the lower bound constraints in the time limit with any hyper parameter, while the other baseline can not produce the k-plex in some cases. The Enum algorithm can not produce any solution since the algorithm without preprocessing is hard to perform on large networks. The FastEnum algorithm only has solutions in k=2 and lb=75, 100. It seems that the enumeration-based algorithm struggles to generate the combination of possible solutions when the value of k grows. The Maplex, Basic BnB and Learn BnB are branch and bound based algorithms. These methods seem to work better than the enumeration algorithm in this network. But the Maplex and Basic BnB performance also decrease when k is large and lb is small. Our Learn BnB still extracts the k-plex in this situation. We thought the value of \(lb\) and \(k\) have a different impact on the baseline algorithm. The S2V-DQN work well under some parameters, but it has the obstacle that we can not load the whole graph into memory, so the RL method only can work on the case that the graph after preprocessing are small enough it also implies RL bases method are resource consuming and our Learn BnB are an more generalized method for solving k-plex. 
\begin{table*}
  \caption{The experimental results of DBLP. The value is the maximize k-plex size found by each methods under different value of \emph{k}, \emph{lb } and \emph{t}. Larger values represent better performance.}\label{tab:dblp}
  \begin{tabular}{cccccccccccccc}
    \toprule
    \multicolumn{1}{c}{\emph{method}}  &\multicolumn{12}{c}{\emph{k-plex size}} \\
    \midrule
    &\emph{lb} &  \multicolumn{3}{c}{30} & \multicolumn{3}{c}{50} & \multicolumn{3}{c}{75} & \multicolumn{3}{c}{100}\\
    \cmidrule{2-14}
     &\emph{time limit} & \multicolumn{1}{c}{10(s)} & \multicolumn{1}{c}{60(s)} & \multicolumn{1}{c}{300(s)}  & \multicolumn{1}{c}{10(s)} & \multicolumn{1}{c}{60(s)} & \multicolumn{1}{c}{300(s)} & \multicolumn{1}{c}{10(s)} & \multicolumn{1}{c}{60(s)} & \multicolumn{1}{c}{300(s)} & \multicolumn{1}{c}{10(s)} & \multicolumn{1}{c}{60(s)} & \multicolumn{1}{c}{300(s)} \\
    \cmidrule{2-14}
    & \emph{k} & \\
    \cmidrule{2-2}
    FastEnum (KDD'17) & & 43 & 43 & 43  & 0 & 0 & 0 & \textbf{114} & \textbf{114} & \textbf{114}  & \textbf{114} & \textbf{114} & \textbf{114} \\
    Enum (SIGMOD'15)& & 0 & 0 & 0 & 0 & 0 & 0 & 0 & 0 & 0 & 0 & 0 & 0 \\
    Maplex (AAAI'21)& 2&  0 & 0 & 62  & 0 & 0 & 84 & \textbf{114} & \textbf{114} & \textbf{114}  & \textbf{114} & \textbf{114} & \textbf{114}\\
    S2V-DQN (NIPS'17) && 29 & \textbf{114} & \textbf{114}  & \textbf{114}& \textbf{114} & \textbf{114} & \textbf{114} & \textbf{114} & \textbf{114} &\textbf{114} & \textbf{114} & \textbf{114}\\
    Base BnB (\textbf{Ours}) & & 0 & 0 & \textbf{114}  & 0 & \textbf{114} & \textbf{114} & \textbf{114} & \textbf{114} & \textbf{114}  & \textbf{114} & \textbf{114} & \textbf{114}\\
    Learn BnB (\textbf{Ours}) & &  \textbf{114} & \textbf{114} & \textbf{114}  & \textbf{114} & \textbf{114} & \textbf{114} & \textbf{114} & \textbf{114} & \textbf{114}  & \textbf{114} & \textbf{114} & \textbf{114}\\
    \cmidrule{1-14}
    FastEnum (KDD'17) & & 0 & 0 & 0  & 0 & 0 & 0 & 0 & 0 & 0  & 0 & 0 & 0 \\ 
    Enum (SIGMOD'15) & &  0 & 0 & 0  & 0 & 0 & 0 & 0 & 0 & 0  & 0 & 0 & 0 \\
    Maplex (AAAI'21) & 4&  0 & 0 & 39  & 0 & 0 & 84 & 0 & 0 & \textbf{114}  & 0 & 0 & \textbf{114} \\
    S2V-DQN (NIPS'17)& &  22 & \textbf{114} & \textbf{114}  & \textbf{114}& \textbf{114} & \textbf{114} & \textbf{114} & \textbf{114} & \textbf{114} &\textbf{114} & \textbf{114} & \textbf{114} \\
    Base BnB (\textbf{Ours})& &  0 & 0 & 94  & 0 & \textbf{114} & \textbf{114} & \textbf{114} & \textbf{114} & \textbf{114}  & \textbf{114} & \textbf{114} & \textbf{114} \\
    Learn BnB (\textbf{Ours}) & &  \textbf{78} & \textbf{114} & \textbf{114}  & \textbf{114} & \textbf{114} & \textbf{114} & \textbf{114} & \textbf{114} & \textbf{114}  & \textbf{114} & \textbf{114} & \textbf{114} \\
    \cmidrule{1-14}
    FastEnum (KDD'17) & & 0 & 0 & 0  & 0 & 0 & 0 & 0 & 0 & 0  & 0 & 0 & 0 \\
    Enum (SIGMOD'15) & &  0 & 0 & 0  & 0 & 0 & 0 & 0 & 0 & 0  & 0 & 0 & 0 \\
    Maplex (AAAI'21) &8 &  0 & 0 & 0  & 0 & 0 & 84 & 0 & 0 & \textbf{114}  & 0 & 0 & \textbf{114} \\
    S2V-DQN (NIPS'17)& &  13 & \textbf{114} & \textbf{114}  & \textbf{114}& \textbf{114} & \textbf{114} & \textbf{114} & \textbf{114} & \textbf{114} &\textbf{114} & \textbf{114} & \textbf{114} \\
    Base BnB (\textbf{Ours})& &  0 & 0 & 50  & 0 & \textbf{114} & \textbf{114} & \textbf{114} & \textbf{114} & \textbf{114}  & \textbf{114} & \textbf{114} & \textbf{114} \\
    Learn BnB (\textbf{Ours})& &  \textbf{73} & \textbf{114} & \textbf{114}  & \textbf{114} & \textbf{114} & \textbf{114} & \textbf{114} & \textbf{114} & \textbf{114}  & \textbf{114} & \textbf{114} & \textbf{114} \\
    \cmidrule{1-14}
    FastEnum (KDD'17) & & 0 & 0 & 0  & 0 & 0 & 0 & 0 & 0 & 0  & 0 & 0 & 0 \\
    Enum (SIGMOD'15) & &  0 & 0 & 0  & 0 & 0 & 0 & 0 & 0 & 0  & 0 & 0 & 0\\
    Maplex (AAAI'21) & 16 &  0 & 0 & 0  & 0 & 0 & 0 & 0 & 0 & 0  & 0 & 0 & \textbf{114} \\
    S2V-DQN (NIPS'17)& &  - & - & -  & \textbf{114}& \textbf{114} & \textbf{114} & \textbf{114} & \textbf{114} & \textbf{114} &\textbf{114} & \textbf{114} & \textbf{114} \\
    Base BnB (\textbf{Ours})& &  0 & 0 & 0  & 0 & 67 & \textbf{114} & \textbf{114} & \textbf{114} & \textbf{114}  & \textbf{114} & \textbf{114} & \textbf{114} \\
    Learn BnB (\textbf{Ours})& &  \textbf{112} & \textbf{114} & \textbf{114} & \textbf{114} & \textbf{114} & \textbf{114} & \textbf{114} & \textbf{114} & \textbf{114}  & \textbf{114} & \textbf{114} & \textbf{114} \\
    \cmidrule{1-14}
    FastEnum (KDD'17) & & - &-  &-   & 0 & 0 & 0 & 0 & 0 & 0  & 0 & 0 & 0 \\
    Enum (SIGMOD'15) & & - & - & -  & 0 & 0 & 0 & 0 & 0 & 0  & 0 & 0 & 0 \\
    Maplex (AAAI'21) &32 & -  & - & -  & 0 & 0 & 0 & 0 & 0 & 0  & 0 & 0 & \textbf{114} \\
    S2V-DQN (NIPS'17)& &  - & - & -  & -&- & - & \textbf{114} & \textbf{114} & \textbf{114} &\textbf{114} & \textbf{114} & \textbf{114}\\
    Base BnB (\textbf{Ours})& & -  &-  &-   & 0 & 0 & 0 & 0 & \textbf{114} & \textbf{114}  & \textbf{114} & \textbf{114} & \textbf{114} \\
    Learn BnB (\textbf{Ours})& & -  &-  & -  & \textbf{63} & \textbf{114} & \textbf{114} & \textbf{114} & \textbf{114} & \textbf{114}  & \textbf{114} & \textbf{114} & \textbf{114} \\

    \bottomrule
  \end{tabular}
  \begin{tablenotes}\footnotesize
    \item[*] *Since when k=32, finding k-plex of lb=30 is trivial, we do not conduct experiments under these parameters.
    \item[*] *For the method \emph{S2V-DQN}, we can not load the whole graph into memory if the graph after prerocessing is to large, so there's no result under certain lb and t.
    \end{tablenotes}
\end{table*}
Next, we conduct the experiment on \emph{web-arabic-2005} networks for confirming our thoughts. The \emph{web-arabic-2005} has 163598 vertices and 1747269 edges. The result is in Table \ref{tab:web}. The results seem similar to the result of \emph{DBLP} that our Learn BnB outperforms other baselines in most of the cases. But there's a difference in the enumeration method FastEnum when the \(lb\) is small. The first step of the FastEnum is to extend the subset from the first vertex and scan the whole vertices, if adding the current visit vertex can make the subset still retain a k-plex, then it keeps adding the node into the subset. This naive technique is relatively easy to detect k-plex in a short time if the density of graphs is high, and it also depends on the input vertices order. Although FastEnum detects the k-plex in 10 seconds, it is hard to further extend the solution even execute to 300 seconds. In contrast, our Learn BnB did not detect the k-plex in 10 seconds but was able to expand the k-plex to a large size with the time limit growing. The S2V-DQN can not work in more case in \emph{web-arabic-2005} then \emph{DBLP}, since the \emph{web-arabic-2005} has high graph density and thus the graph after preprocessing are larger than \emph{DBLP}.
\begin{table*}
  \caption{The experimental results of web-arabic-2005. The value is the maximize k-plex size found by each methods under different value of \emph{k}, \emph{lb } and \emph{t}. Larger values represent better performance.}
\label{tab:web}
  \begin{tabular}{cccccccccccccc}
    \toprule
    \multicolumn{1}{c}{\emph{method}}  &\multicolumn{12}{c}{\emph{k-plex size}} \\
    \midrule
    &\emph{lb} &  \multicolumn{3}{c}{30} & \multicolumn{3}{c}{50} & \multicolumn{3}{c}{75} & \multicolumn{3}{c}{100}\\
    \cmidrule{2-14}
     &\emph{time limit} & \multicolumn{1}{c}{10(s)} & \multicolumn{1}{c}{60(s)} & \multicolumn{1}{c}{300(s)}  & \multicolumn{1}{c}{10(s)} & \multicolumn{1}{c}{60(s)} & \multicolumn{1}{c}{300(s)} & \multicolumn{1}{c}{10(s)} & \multicolumn{1}{c}{60(s)} & \multicolumn{1}{c}{300(s)} & \multicolumn{1}{c}{10(s)} & \multicolumn{1}{c}{60(s)} & \multicolumn{1}{c}{300(s)} \\
    \cmidrule{2-14}
    & \emph{k} & \\
    \cmidrule{2-2}
    FastEnum & & \textbf{39} & 39 & 39  & 0 & 0 & 50 & 88 & 88 & 88  & \textbf{102} & \textbf{102} & \textbf{102} \\
    Enum & &  0 & 0 & 0  & 0 & 0 & 0 & 0 & 0 & 0  & 0 & 0 & 0 \\
    Maplex & 2&  0 & 0 & 0  & 0 & 0 & 0 & 0 & 0 & 0  & 0 & 0 & \textbf{102} \\
    S2V-DQN & &  - & - & -  & -& - & - & \textbf{102} & \textbf{102} & \textbf{102} &\textbf{102} & \textbf{102} & \textbf{102} \\
    Base BnB & &  0 & 0 & 0  & 0 & 0 & 0 & 0 & \textbf{102} & \textbf{102}  & \textbf{102} & \textbf{102} & \textbf{102} \\
    Learn BnB & &  0 & \textbf{45} & \textbf{102}  & 0 & \textbf{64} & \textbf{102} & \textbf{102} & \textbf{102} & \textbf{102} & \textbf{102} & \textbf{102} & \textbf{102} \\
    \cmidrule{1-14}
    FastEnum & & \textbf{39} & 39 & 39  & 0 & 0 & 0 & 88 & 88 & 88  & \textbf{102} & \textbf{102} & \textbf{102} \\
    Enum & &  0 & 0 & 0  & 0 & 0 & 0 & 0 & 0 & 0  & 0 & 0 & 0 \\
    Maplex & 4&  0 & 0 & 0  & 0 & 0 & 0 & 0 & 0 & \textbf{102} & 0 & \textbf{102} & \textbf{102} \\
    S2V-DQN & &  - & - & -  & -& - & - & \textbf{102} & \textbf{102} & \textbf{102} &\textbf{102} & \textbf{102} & \textbf{102} \\
    Base BnB & &  0 & 0 & 0  & 0 & 0 & 0 & 0 & \textbf{102} & \textbf{102} & \textbf{102} & \textbf{102} & \textbf{102} \\
    Learn BnB & &  0 & \textbf{44} & \textbf{101}  & 0 & \textbf{55} & \textbf{102} & \textbf{102} & \textbf{102} & \textbf{102}  & \textbf{102} & \textbf{102} & \textbf{102} \\
    \cmidrule{1-14}
    FastEnum & & \textbf{39} & 39 & 39  & 0 & 0 & 0 & 88 & 88 & 88  & \textbf{102}& \textbf{102} & \textbf{102} \\
    Enum & &  0 & 0 & 0  & 0 & 0 & 0 & 0 & 0 & 0  & 0 & 0 & 0 \\
    Maplex &8 &  0 & 0 & 0  & 0 & 0 & 0 & 0 & 0 & \textbf{102} & 0 & 0 & \textbf{102} \\
    S2V-DQN & &  - & - & -  & -& - & - & \textbf{102} & \textbf{102} & \textbf{102} &\textbf{102} & \textbf{102} & \textbf{102} \\
    Base BnB & &  0 & 0 & 0  & 0 & 0 & 0 & 0 & \textbf{102} & \textbf{102} & \textbf{102} & \textbf{102} & \textbf{102} \\
    Learn BnB & &  0 & \textbf{43} & \textbf{98}  & 0 & \textbf{51} &\textbf{102} & \textbf{102} & \textbf{102} & \textbf{102}  & \textbf{102} & \textbf{102} & \textbf{102} \\
    \cmidrule{1-14}
    FastEnum & & \textbf{37} & 37 & 37  & 0 & 0 & 0 & 88 & 88 & 88  & 0 & 0 & 0 \\
    Enum & &  0 & 0 & 0  & 0 & 0 & 0 & 0 & 0 & 0  & 0 & 0 & 0 \\
    Maplex & 16&  0 & 0 & 0  & 0 & 0 & 0 & 0 & \textbf{102} & \textbf{102}& 0 & \textbf{102} & \textbf{102} \\
    S2V-DQN & &  - & - & -  & -& - & -& \textbf{102} & \textbf{102} & \textbf{102} &\textbf{102} & \textbf{102} & \textbf{102} \\
    Base BnB & &  0 & 0 & 0  & 0 & 0 & 0 & 0 & \textbf{102} & \textbf{102}  & \textbf{102} & \textbf{102} & \textbf{102} \\
    Learn BnB & &  0 & \textbf{41} & \textbf{94}  & \textbf{102} & \textbf{102} &\textbf{102}& \textbf{102} & \textbf{102} &\textbf{102} & \textbf{102}& \textbf{102} & \textbf{102} \\
    \cmidrule{1-14}
    FastEnum &  & - & - & -  & 0 & 0 & 0 & 0 & 0 & 0  & 0 & 0 & 0 \\
    Enum & & -  & - & -  & 0 & 0 & 0 & 0 & 0 & 0  & 0 & 0 & 0 \\
    Maplex & 32& -  & - &  - & 0 & 0 & 0 & 0 & 0 & 0  & 0 & 0 & \textbf{102} \\
    S2V-DQN & &  - & - & -  & -& - & - & - & - & - &\textbf{102} & \textbf{102} & \textbf{102} \\
    Base BnB & &  - &-  & -  & 0 & 0 & 0 & 0 & 0 & 0  & 0 & \textbf{102} & \textbf{102} \\
    Learn BnB & & -  &-  &  - & 0 & 0 & \textbf{96} & 0 & 0 &\textbf{102}  & \textbf{102} & \textbf{102} & \textbf{102} \\
    \bottomrule
  \end{tabular}
  \begin{tablenotes}\footnotesize
    \item[*] *Since when k=32, finding k-plex of lb=30 is trivial, we do not conduct experiments under these parameters.
    \item[*] *For the method \emph{S2V-DQN}, we can not load the whole graph into memory if the graph after prerocessing is to large, so there's no result under certain lb and t.
    \end{tablenotes}
\end{table*}
Although our goal is to extract k-plex in a short time, we further conduct the experiment on \emph{DBLP} and \emph{web-arabic-2005} in a long time to see can each algorithm perform better in the large time scope. We test each algorithm in 10,20,30 minute and show in Table \ref{tab:dblp_long} and Table \ref{tab:web_long}. We can see all algorithm has some improvements with giving a long time. But none of the baselines outperforms our approach and thus the learn to bound efficiently and correctly extract the k-plex even in large time scope.
\begin{table*}
  \caption{The experimental results of DBLP under long time limit. The value is the maximize k-plex size found by each methods under different value of \emph{k}, \emph{lb } and \emph{t}. Larger values represent better performance.}\label{tab:dblp_long}
  \centering
  \resizebox{18cm}{!}{%
\def\arraystretch{1.2}
  \begin{tabular}{cccccccccccccc}
    \toprule
    \multicolumn{1}{c}{\emph{method}}  &\multicolumn{12}{c}{\emph{k-plex size}} \\
    \midrule
    &\emph{lb} &  \multicolumn{3}{c}{30} & \multicolumn{3}{c}{50} & \multicolumn{3}{c}{75} & \multicolumn{3}{c}{100}\\
    \cmidrule{2-14}
     &\emph{time limit} & \multicolumn{1}{c}{10(min)} & \multicolumn{1}{c}{20(min)} & \multicolumn{1}{c}{30(min)}  & \multicolumn{1}{c}{10(min)} & \multicolumn{1}{c}{20(min)} & \multicolumn{1}{c}{30(min)} & \multicolumn{1}{c}{10(min)} & \multicolumn{1}{c}{20(min)} & \multicolumn{1}{c}{30(min)} & \multicolumn{1}{c}{10(min)} & \multicolumn{1}{c}{20(min)} & \multicolumn{1}{c}{30(min)}  \\
    \cmidrule{2-14}
    & \emph{k} & \\
    \cmidrule{2-2}
    FastEnum & & 43 & 43 & 43  & 0 & 0 & 0& \textbf{114} & \textbf{114} & \textbf{114}  & \textbf{114} & \textbf{114} & \textbf{114} \\
    Enum & &  0 & 0 & 0 & 0 & 0 & 0 & 0 & 0 & 0  & 0 & 0 & 0 \\
    Maplex &2 &  64 & 80 & 80  & 84 & 84 & 84 & \textbf{114} & \textbf{114} & \textbf{114}  & \textbf{114} & \textbf{114} & \textbf{114} \\
    Base BnB (\textbf{Ours}) & &  \textbf{114} & \textbf{114} & \textbf{114}  & \textbf{114} & \textbf{114} & \textbf{114} & \textbf{114} & \textbf{114} & \textbf{114}  & \textbf{114} & \textbf{114} & \textbf{114} \\
    Learn BnB (\textbf{Ours}) & &  \textbf{114} & \textbf{114} & \textbf{114}  & \textbf{114} & \textbf{114} & \textbf{114} & \textbf{114} & \textbf{114} & \textbf{114}  & \textbf{114} & \textbf{114} & \textbf{114} \\
    \cmidrule{1-14}
    FastEnum  & & 0 & 0 & 0  & 0 & 0 & 0 & 0 & 0 & 0  & 0 & 0 & 0 \\
    Enum & &  0 & 0 & 0  & 0 & 0 & 0 & 0 & 0 & 0  & 0 & 0 & 0 \\
    Maplex & 4&  0 & 80 & 80  & 84 & 84 & 84 & \textbf{114} & \textbf{114} & \textbf{114}  & \textbf{114} & \textbf{114} & \textbf{114} \\
    Base BnB (\textbf{Ours})& &  0 & 102 & \textbf{114}  & \textbf{114} & \textbf{114} & \textbf{114} & \textbf{114} & \textbf{114} & \textbf{114}  & \textbf{114} & \textbf{114} & \textbf{114} \\
    Learn BnB (\textbf{Ours}) & &   \textbf{114} & \textbf{114} & \textbf{114}  & \textbf{114} & \textbf{114} & \textbf{114} & \textbf{114} & \textbf{114} & \textbf{114}  & \textbf{114} & \textbf{114} & \textbf{114} \\
    \cmidrule{1-14}
    FastEnum & & 0 & 0 & 0  & 0 & 0 & 0 & 0 & 0 & 0  & 0 & 0 & 0 \\ 
    Enum & &  0 & 0 & 0  & 0 & 0 & 0 & 0 & 0 & 0  & 0 & 0 & 0 \\
    Maplex  & 8&  0 & 63 & 63  & 84 & 84 & 84 & \textbf{114} & \textbf{114} & \textbf{114}  & \textbf{114} & \textbf{114} & \textbf{114} \\
    Base BnB (\textbf{Ours})& &  103 & \textbf{114} & \textbf{114}  & \textbf{114} & \textbf{114} & \textbf{114} & \textbf{114} & \textbf{114} & \textbf{114}  & \textbf{114} & \textbf{114} & \textbf{114} \\
    Learn BnB (\textbf{Ours})& &  \textbf{114} & \textbf{114} & \textbf{114}  & \textbf{114} & \textbf{114} & \textbf{114} & \textbf{114} & \textbf{114} & \textbf{114}  & \textbf{114} & \textbf{114} & \textbf{114} \\
    \cmidrule{1-14}
    FastEnum  & &0 & 0 & 0  & 0 & 0 & 0 & 0 & 0 & 0  & 0 & 0 & 0 \\
    Enum  & &  0 & 0 & 0  & 0 & 0 & 0 & 0 & 0 & 0  & 0 & 0 & 0 \\
    Maplex &16 &  63 & 63 & 63  & 0 & 0 & 62 & 0 & 0 & 0  & \textbf{114} & \textbf{114} & \textbf{114} \\
    Base BnB (\textbf{Ours})& &  0 & 0 & 47  & \textbf{114} & \textbf{114} & \textbf{114} & \textbf{114} & \textbf{114} & \textbf{114}  & \textbf{114} & \textbf{114} & \textbf{114} \\
    Learn BnB (\textbf{Ours})& &  \textbf{114} & \textbf{114} & \textbf{114} & \textbf{114} & \textbf{114} & \textbf{114} & \textbf{114} & \textbf{114} & \textbf{114}  & \textbf{114} & \textbf{114} & \textbf{114} \\
    \cmidrule{1-14}
    FastEnum  &  & - &-  &-   & 0 & 0 & 0 & 0 & 0 & 0  & 0 & 0 & 0 \\
    Enum  & & - & - & -  & 0 & 0 & 0 & 0 & 0 & 0  & 0 & 0 & 0 \\
    MaPlex  &32 & -  & - & -  & 0 & 0 & 0 & 0 & 0 & 0  & \textbf{114} & \textbf{114} & \textbf{114} \\
    Base BnB (\textbf{Ours})& & -  &-  &-   & 0 & 100 & \textbf{114} & \textbf{114} & \textbf{114} & \textbf{114}  & \textbf{114} & \textbf{114} & \textbf{114} \\
    Learn BnB (\textbf{Ours})& & -  &-  & -  & \textbf{114} & \textbf{114} & \textbf{114} & \textbf{114} & \textbf{114} & \textbf{114}  & \textbf{114} & \textbf{114} & \textbf{114} \\
    \bottomrule
  \end{tabular}}
  \begin{tablenotes}\footnotesize
    \item[*] *Since when k=32, finding k-plex of lb=30 is trivial, we do not conduct experiments under these parameters.
    \item[*] *For the method \emph{S2V-DQN}, we can not load the whole graph into memory if the graph after prerocessing is to large, so there's no result under certain lb and t.
    \end{tablenotes}
\end{table*}

\begin{table*}
  \caption{The experimental results of web-arabic-2005 under long time limit. The value is the maximize k-plex size found by each methods under different value of \emph{k}, \emph{lb } and \emph{t}. Larger values represent better performance.}
\label{tab:web_long}
  \centering
  \resizebox{18cm}{!}{%
\def\arraystretch{1.2}
  \begin{tabular}{cccccccccccccc}
    \toprule
    \multicolumn{1}{c}{\emph{method}}  &\multicolumn{12}{c}{\emph{k-plex size}} \\
    \midrule
    &\emph{lb} &  \multicolumn{3}{c}{30} & \multicolumn{3}{c}{50} & \multicolumn{3}{c}{75} & \multicolumn{3}{c}{100}\\
    \cmidrule{2-14}
     &\emph{time limit} & \multicolumn{1}{c}{10(min)} & \multicolumn{1}{c}{20(min)} & \multicolumn{1}{c}{30(min)}  & \multicolumn{1}{c}{10(min)} & \multicolumn{1}{c}{20(min)} & \multicolumn{1}{c}{30(min)} & \multicolumn{1}{c}{10(min)} & \multicolumn{1}{c}{20(min)} & \multicolumn{1}{c}{30(min)} & \multicolumn{1}{c}{10(min)} & \multicolumn{1}{c}{20(min)} & \multicolumn{1}{c}{30(min)}  \\
    \cmidrule{2-14}
    & \emph{k} & \\
    \cmidrule{2-2}
    FastEnum & & 39 & 39 & 39  & 0 & 0 & 0 & 88 & 88 & 88 & \textbf{102} & \textbf{102} & \textbf{102} \\ 
    Enum & &  0 & 0 & 0  & 0 & 0 & 0 & 0 & 0 & 0  & 0 & 0 & 0 \\
    Maplex & 2&  0 & 0 & 0 & 0 & \textbf{102} & \textbf{102} & \textbf{102} &\textbf{102} & \textbf{102}  & \textbf{102} & \textbf{102} & \textbf{102} \\
    Base BnB & &  0 & 0 & 0  & 0 & 0 & 0 & \textbf{102} & \textbf{102} & \textbf{102}  & \textbf{102} & \textbf{102} & \textbf{102} \\
    Learn BnB & &  \textbf{102} &\textbf{102} & \textbf{102}  & \textbf{102} & \textbf{102} & \textbf{102} & \textbf{102} & \textbf{102} & \textbf{102} & \textbf{102} & \textbf{102} & \textbf{102} \\
    \cmidrule{1-14}
    FastEnum &  & 39 & 39 & 39  & 0 & 0 & 0 & 88 & 88 & 88  & \textbf{102} & \textbf{102} & \textbf{102} \\
    Enum & &  0 & 0 & 0  & 0 & 0 & 0 & 0 & 0 & 0  & 0 & 0 & 0 \\
    Maplex &4 &  0 & 0 & 0  & 0 & 0 & 0 &\textbf{102}& \textbf{102} & \textbf{102} & \textbf{102} & \textbf{102} & \textbf{102} \\
    Base BnB & &  0 & 0 & 0  & 0 & 0 & 0 & \textbf{102} & \textbf{102} & \textbf{102} & \textbf{102} & \textbf{102} & \textbf{102} \\
    Learn BnB & &  \textbf{102} & \textbf{102} & \textbf{102}  & \textbf{102} & \textbf{102} & \textbf{102} & \textbf{102} & \textbf{102} & \textbf{102}  & \textbf{102} & \textbf{102} & \textbf{102} \\
    \cmidrule{1-14}
    FastEnum & & 39 & 39 & 39  & 0 & 0 & 0 & 88 & 88 & 88  & \textbf{102}& \textbf{102} & \textbf{102}\\
    Enum & &  0 & 0 & 0  & 0 & 0 & 0 & 0 & 0 & 0  & 0 & 0 & 0 \\
    Maplex & 8&  0& 0& 0 &0 & 0 & 0 & \textbf{102} & \textbf{102} & \textbf{102} & \textbf{102} & \textbf{102} & \textbf{102} \\
    Base BnB & &  0 & 0 & 0  & 0 & 0 & 0 & \textbf{102} & \textbf{102} & \textbf{102} & \textbf{102} & \textbf{102} & \textbf{102} \\
    Learn BnB & &  \textbf{102} & \textbf{102} & \textbf{102}  & \textbf{102} & \textbf{102} &\textbf{102} & \textbf{102} & \textbf{102} & \textbf{102}  & \textbf{102} & \textbf{102} \textbf{102} \\
    \cmidrule{1-14}
    FastEnum &  & 37 & 37 & 37  & 0& 0 & 0 & 88 & 88 & 88  & 0 &0 & 0 \\
    Enum & &  0 & 0 & 0  & 0 & 0 & 0 & 0 & 0 & 0  & 0 & 0 & 0 \\
    Maplex & 16&  0 & 0 & 0  & 0 & 0 & 0 & \textbf{102} & \textbf{102} & \textbf{102}& \textbf{102} & \textbf{102} & \textbf{102} \\
    Base BnB & &  0 & 0 & 0  & 0 & 0 & 0& \textbf{102} & \textbf{102} & \textbf{102}  & \textbf{102} & \textbf{102} & \textbf{102} \\
    Learn BnB & &  \textbf{102} & \textbf{102} & \textbf{102}  & \textbf{102} & \textbf{102} &\textbf{102}& \textbf{102} & \textbf{102} &\textbf{102} & \textbf{102}& \textbf{102} & \textbf{102} \\
    \cmidrule{1-14}
    FastEnum &  & - & - & -  & 0 & 0 & 0 & 0 & 0 & 0 & 0 & 0 & 0 \\ 
    Enum & & -  & - & -  & 0 &0 & 0 & 0 & 0 & 0  & 0 & 0 & 0 \\
    MaPlex & 32& -  & - &  - &0 & 0 & 0 & 0 & 0 & 0  & \textbf{102} & \textbf{102} & \textbf{102} \\
    Base BnB & &  - &-  & -  & 0 & 0 & 0 & 0 & 0 & 0  & \textbf{102} & \textbf{102} & \textbf{102} \\
    Learn BnB & & -  &-  &  - & \textbf{102} & \textbf{102} & \textbf{102} & \textbf{102} & \textbf{102} & \textbf{102}  & \textbf{102} & \textbf{102} & \textbf{102} \\
    \bottomrule
  \end{tabular}}
  \begin{tablenotes}\footnotesize
    \item[*] *Since when k=32, finding k-plex of lb=30 is trivial, we do not conduct experiments under these parameters.
    \item[*] *For the method \emph{S2V-DQN}, we can not load the whole graph into memory if the graph after prerocessing is to large, so there's no result under certain lb and t.
    \end{tablenotes}
\end{table*}

\subsubsection{Comparison of Basic and Learn Branch and Bound}
To examine the accuracy of the Learn BnB and compare the efficiency between Learn BnB and Basic BnB, we let the two algorithms finish all the searches and then measure their performance. In Table \ref{tab:comparison}, we conduct the experiment on 5 small graphs, the maximize k-plex size detect by the two algorithms are the same, but the Learn BnB can shrink the search time to at least 4 times and up to 80 times compare to Basic BnB. We also measure the accuracy of the bound by Learn BnB decisions. The accuracy means the percent of the bound by Learn BnB do the right deicion(do not bound the feasible solution). The experiment shows that the Learn BnB bound much more than the Basic BnB but did not affect the detecting k-plex size. Since the Basic BnB bound strategy is not a strong bound strategy, the Learn BnB might learn the new bound pattern on the difference constraint space, which perform efficiently.
\begin{table}
  \caption{Comparison of Basic and Learn Branch and Bound}
\label{tab:comparison}
\resizebox{\columnwidth}{!}{%
\def\arraystretch{1.2}
  \begin{tabular}{ccccccc}
    \toprule
    Graph &$|V|$ & $|E|$ &method & time & k-plex size & accuracy\\
    \midrule
aves-sparrowlyon-flock-season3 & 27 & 164 & Basic BnB & 395(s) & 11 & - \\
 &  &   &Learn BnB & 15(s) &11 & 0.83 \\
\cmidrule{1-7}
aves-weaver-social-03 & 42 & 152 & Basic BnB & 2399(s) & 10 & - \\
 &  &   &Learn BnB & 31(s) &10 & 0.99 \\
\cmidrule{1-7}
johnson 8-2-4 & 28 & 210 & Basic BnB & 3412(s) & 5 & - \\
 &  &   &Learn BnB & 744(s) &5 & 0.978 \\
\cmidrule{1-7}
aves-sparrow-social-2009 & 31 & 211 & Basic BnB& 2971(s)  & 11 & - \\
 &  &   &Learn BnB & 149(s)  &11 & 0.901 \\
\cmidrule{1-7}
insecta-beetle-group-c1-period-1 & 30 & 185 & Basic BnB & 2101(s) & 7 & - \\
 &  &   &Learn BnB & 190(s) &7 & 0.993 \\
    \bottomrule
  \end{tabular}}
\end{table}
\subsubsection{Sensitivity Analysis}
After the experiment on the above networks, we want to analyze how algorithms perform under the different values of \(k\) and \(lb\). Figure \ref{fig:k} and Figure \ref{fig:lb} show the average size of detecting k-plex in time limit 300 seconds with different value of \(k\) and \(lb\). In Figure \ref{fig:DBLP-k}, we can find that by the value of k growing, FastEnum and Basic BnB become hard to detect k-plex. For FastEnum, the big value of k enhances the number of possible k-plex, thus FastEnum spends lots of time enumerating the combination extends from the current k-plex with the select vertices, which prevents it from detecting a large k-plex in a short time. For Basic BnB, the growth of k indicated the graph after preprocessing step becomes larger, then increases the computation load of the bound decision. Maplex performance also decreases in k=2 to 16, it might be the same reason as Basic BnB. In Figure \ref{fig:WEB-k}, here only FastEnum decrease with k growing. The two branch and bound based methods seem not influence by k until k=32, but these two methods perform worse than those performed in \emph{DBLP} networks. We surmise that the graph density also influences the performance. If the graph density is high, the impact of k might be diluted. Thus we conduct the experiment to compare graph density following. In contrast, Learn BnB is also influenced by k in Figure \ref{fig:WEB-k} but only to a small extent. The experiment shows our Learn BnB perform more stable in different value of k.

Figure \ref{fig:lb} show all the methods tend to detect a larger k-plew while \(lb\) growing. One reason is that if the network does exist the k-plex is larger or equal than \(lb\), the preprocessing step can efficiently discard the vertices not satisfy the k-lex property, thus the remaining graph \(G^{'}\) is small to make the searching easier. But in the real scenario, we can not assume we can always guess the \(lb\) correctly, the high \(lb\) might make the algorithms can not find a solution at all. So it is important to see if the algorithm works in different \(lb\).  All baselines performance drops while \(lb\) becomes smaller. It seems that no matter the enumeration strategy or the branch and bound strategy both hard to extract the node to extend in the large set of vertices. In contrast, Learn BnB still can find the large k-plex when \(lb\) is low.
\begin{figure}[h]
  \centering
    \subfigure[DBLP]{\includegraphics[width=4.2cm]{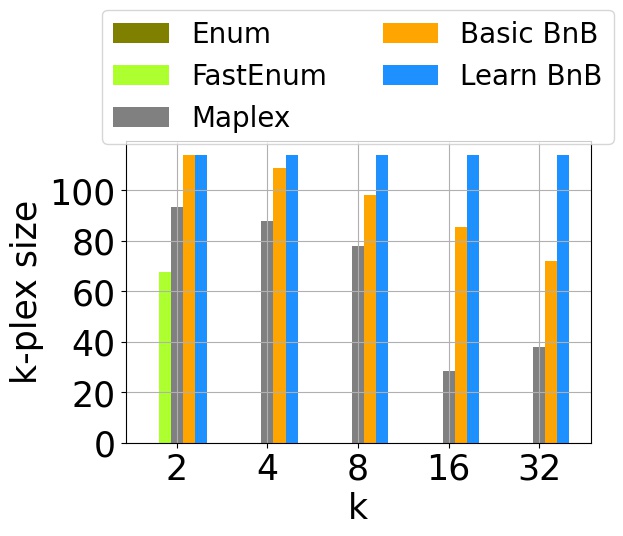}
    \label{fig:DBLP-k}}\subfigure[web-arabic-2005]{\includegraphics[width=4.2cm]{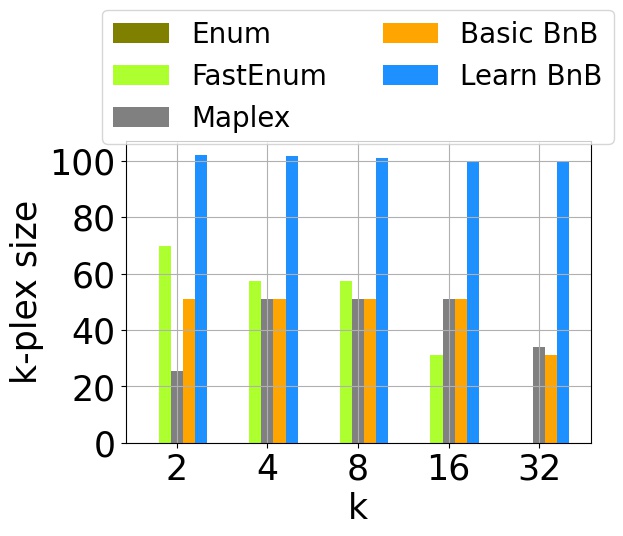}
    \label{fig:WEB-k}}
    \caption{The average size of k-plex with different k at t=300s }
    \label{fig:k}
\end{figure}
\begin{figure}[h]
\centering
\subfigure[DBLP]{\includegraphics[width=4.2cm]{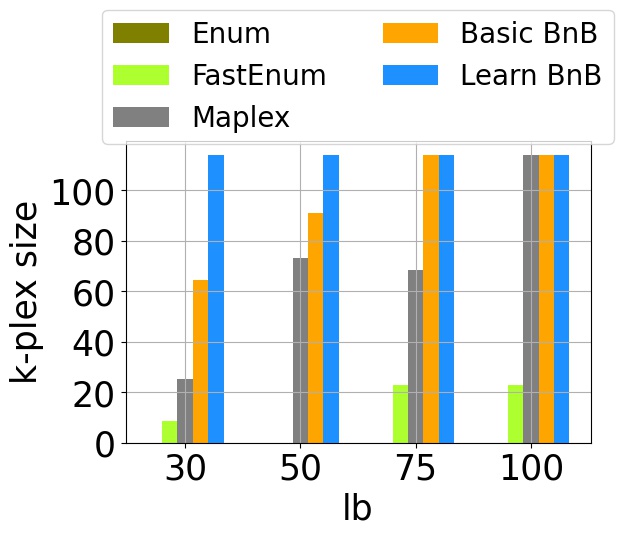}
\label{fig:DBLP-lb}}\subfigure[web-arabic-2005]{\includegraphics[width=4.2cm]{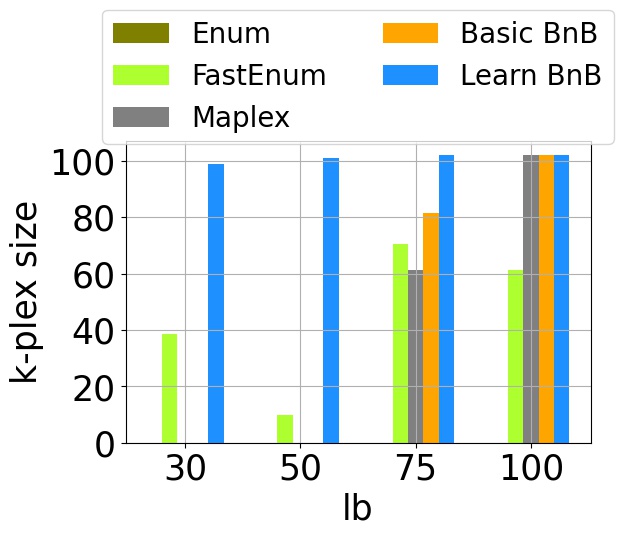}
\label{fig:WEB-lb}}
\caption{The average size of k-plex with different lb }
\label{fig:lb}
\end{figure}

\subsubsection{Graph Properties}
We conduct the experiment on the four collaboration networks to see how graph density affect the algorithms' performance.The result of the collaboration networks is shown in Table \ref{tab:erdos992},\ref{tab:grqc},\ref{tab:condmat},\ref{tab:astroph}. Values in table represent the size of k-plex detected by algorithms under the different parameters.  

See Figure \ref{tab:ca},  the size of \emph{erdos992} and \emph{GrQc} are closed but density of the \emph{GrQc} is 2.5times of \emph{erdos992}. The baselines seems work better in the high density graph \emph{GrQc} while hard to find the solution in \emph{erdos992}. Basic BnB and our Learn BnB finds the k-plex when k=4, lb=10 in \emph{erdos992} while others can not. Than Learn BnB van find the large k-plex even \(lb\) is small in \emph{GrQc}, it seems Learn work stable in these networks.

Graph \emph{Condmat} and \emph{AstroPh} node size are both around 20 thousands, but the average degree of \emph{AstroPh} is 2 to 3 times of \emph{Condmat}. In the sparse graph \emph{Condmat} at k=2, only our Learn BnB finds out the solution on each possible \(lb\); the  Basic BnB still has the obstacle of computing load. EnumFast detects the k-plex whose size is 23 at \(lb\)=20, but it cannot detect any k-plex at \(lb\)=10, it shows that the algorithm does not has stable performance at all parameter, the same as the Maplex algorithm.

The result of graph \emph{AstroPh} is different from others. When k=2, the FastEnum and Learn BnB are lead and perform better than each other at different parameters. Here we again surmise the key point is the first step of the FastEnum can generate the k-plex quickly while the graph has high density. Although the Enum has the same step,  it takes the input graph without pruning, so the average degree is small and makes it difficult to extend the subset with this naive approach. The branch and bound base algorithms Maplex and Basic BnB do not perform well on this network, it can prove again when the graph is large enough and has a high density, it will aggravate the computing load of the bound process. Under this situation, our Learn BnB can still find a solution under size and time constraints. At k=4, the EnumFast loses its advantage and only the Learn BnB generates a good solution.

Based on the observation above, we infer that FastEnum performs well on the high-density graph with s small value of k, but hard to perform its effect on the sparse graph; the Enum is hard to build a solution for the large graph without preprocessing; Maplex performance is not stable and its performance has a relation with different \(lb\) and k; Basic BnB performance decline when the graph size become larger and average degree growing because of the computation load; ours Learn BnB is affected little by the parameters and graph properties, which perform stable than others. It can say that the Learn BnB successfully downgrades the computation load of each bound process and also perform the effect of bound, which makes it can generalize to different properties and conditions.
\begin{table*}
\centering
  \caption{The experimental results of collaboration network Erdos992. The value is the maximize k-plex size found by each methods under different value of \emph{k}, \emph{lb } and \emph{t}. Larger values represent better performance.}
\label{tab:erdos992}
  \resizebox{17cm}{!}{%
\def\arraystretch{1.2}
  \begin{tabular}{cccccccccccccccccccccc}
    \toprule
    \multicolumn{1}{c}{\emph{method}}  &\multicolumn{20}{c}{\emph{k-plex size}} \\
    \midrule
    &\emph{k} &  \multicolumn{10}{c}{2} & \multicolumn{10}{c}{4} \\
    \cmidrule{2-22}
    &\emph{lb} &  \multicolumn{2}{c}{5} & \multicolumn{2}{c}{10} & \multicolumn{2}{c}{20} & \multicolumn{2}{c}{30}  & \multicolumn{2}{c}{40} & \multicolumn{2}{c}{5} & \multicolumn{2}{c}{10} & \multicolumn{2}{c}{20} & \multicolumn{2}{c}{30}  & \multicolumn{2}{c}{40}\\
    \cmidrule{2-22}
     &\emph{time limit} & \multicolumn{1}{c}{10(s)} & \multicolumn{1}{c}{60(s)}  & \multicolumn{1}{c}{10(s)} & \multicolumn{1}{c}{60(s)} &  \multicolumn{1}{c}{10(s)} & \multicolumn{1}{c}{60(s)} &  \multicolumn{1}{c}{10(s)} & \multicolumn{1}{c}{60(s)}  & \multicolumn{1}{c}{10(s)} & \multicolumn{1}{c}{60(s)}  & \multicolumn{1}{c}{10(s)} & \multicolumn{1}{c}{60(s)} &  \multicolumn{1}{c}{10(s)} & \multicolumn{1}{c}{60(s)} &  \multicolumn{1}{c}{10(s)} & \multicolumn{1}{c}{60(s)}  & \multicolumn{1}{c}{10(s)} & \multicolumn{1}{c}{60(s)}  & \multicolumn{1}{c}{10(s)} & \multicolumn{1}{c}{60(s)}   \\
    \cmidrule{2-22}
    FastEnum && 5 & 0 & 0 & 0 & 0  &\textbf{8}  & 0 & 0 & 0 & 0 & 0 & 0 &0 &0 &0 & \textbf{5} & 0 &0 &0 &0 \\
    Enum && 5 & 0 & 0 & 0 & 0  &5  & 0 & 0 & 0 & 0 & 0 & 0 &0 &0 &0 & \textbf{5} & 0 &0 &0 &0 \\
    Maplex && \textbf{8} & 0 & 0 & 0 & 0  &\textbf{8}  & 0 & 0 & 0 & 0 & 0 & 0 &0 &0 &0 & 0 & 0 &0 &0 &0 \\
    Basic BnB && 0 & 0 & 0 & 0 & 0  &0  & 0 & 0 & 0 & 0 & 0 & \textbf{10} &0 &0 &0 & \textbf{5}& \textbf{10} &0 &0 &0 \\
    Learn BnB && 5 & 0 & 0 & 0 & 0  &5  & 0 & 0 & 0 & 0 & \textbf{5} & \textbf{10} &0 &0 &0 & \textbf{5} & \textbf{10} &0 &0 &0 \\
    \bottomrule
  \end{tabular}}
\end{table*}

\begin{table*}
\centering
  \caption{The experimental results of collaboration network GrQc. The value is the maximize k-plex size found by each methods under different value of \emph{k}, \emph{lb } and \emph{t}. Larger values represent better performance.}
\label{tab:grqc}
  \resizebox{17cm}{!}{%
\def\arraystretch{1.2}
  \begin{tabular}{cccccccccccccccccccccc}
    \toprule
    \multicolumn{1}{c}{\emph{method}}  &\multicolumn{20}{c}{\emph{k-plex size}} \\
    \midrule
    &\emph{k} &  \multicolumn{10}{c}{2} & \multicolumn{10}{c}{4} \\
    \cmidrule{2-22}
    &\emph{lb} &  \multicolumn{2}{c}{5} & \multicolumn{2}{c}{10} & \multicolumn{2}{c}{20} & \multicolumn{2}{c}{30}  & \multicolumn{2}{c}{40} & \multicolumn{2}{c}{5} & \multicolumn{2}{c}{10} & \multicolumn{2}{c}{20} & \multicolumn{2}{c}{30}  & \multicolumn{2}{c}{40}\\
    \cmidrule{2-22}
     &\emph{time limit} & \multicolumn{1}{c}{10(s)} & \multicolumn{1}{c}{60(s)}  & \multicolumn{1}{c}{10(s)} & \multicolumn{1}{c}{60(s)} &  \multicolumn{1}{c}{10(s)} & \multicolumn{1}{c}{60(s)} &  \multicolumn{1}{c}{10(s)} & \multicolumn{1}{c}{60(s)}  & \multicolumn{1}{c}{10(s)} & \multicolumn{1}{c}{60(s)}  & \multicolumn{1}{c}{10(s)} & \multicolumn{1}{c}{60(s)} &  \multicolumn{1}{c}{10(s)} & \multicolumn{1}{c}{60(s)} &  \multicolumn{1}{c}{10(s)} & \multicolumn{1}{c}{60(s)}  & \multicolumn{1}{c}{10(s)} & \multicolumn{1}{c}{60(s)}  & \multicolumn{1}{c}{10(s)} & \multicolumn{1}{c}{60(s)}   \\
    \cmidrule{2-22}
    FastEnum && 0 & \textbf{44} & \textbf{44} & \textbf{44} & \textbf{44}  & 0 & \textbf{44} & \textbf{44} & \textbf{44} & \textbf{44} & \textbf{46} &0 &0 &0 &0 &\textbf{46} &0 &0 &0 &0 \\
    Enum && 0 & 0 & 0 & 0 & 0  & 0 & 0 & 0 & 0 & 0 & 0 &0 &0 &0 &0 &0 &0 &0 &0 &0 \\
    Maplex && 0 & 0 & 0 & 0 & 0  & 0 & 0 & 0 & 0 & 0 & 36 &\textbf{46} &\textbf{46} &\textbf{46} &\textbf{46} &\textbf{46} &\textbf{46} &\textbf{46} &\textbf{46} &\textbf{46} \\
    Basic BnB && 7 & \textbf{44} & \textbf{44} & \textbf{44} & \textbf{44}  & \textbf{42} & \textbf{44} & \textbf{44} & \textbf{44} & \textbf{44} & 0 &\textbf{46} &\textbf{46} &\textbf{46} &\textbf{46} &16 &\textbf{46} &\textbf{46} &\textbf{46} &\textbf{46} \\
    Learn BnB && \textbf{44} & \textbf{44} & \textbf{44} & \textbf{44} & \textbf{44}  & \textbf{44} & \textbf{44} & \textbf{44} & \textbf{44} & \textbf{44} & \textbf{46} &\textbf{46} &\textbf{46} &\textbf{46} &\textbf{46} &\textbf{46} &\textbf{46} &\textbf{46} &\textbf{46} &\textbf{46} \\
    \bottomrule
  \end{tabular}}
\end{table*}

\begin{table*}
\centering
  \caption{The experimental results of collaboration network Condmat. The value is the maximize k-plex size found by each methods under different value of \emph{k}, \emph{lb } and \emph{t}. Larger values represent better performance.}
\label{tab:condmat}
  \resizebox{17cm}{!}{%
\def\arraystretch{1.2}
  \begin{tabular}{cccccccccccccccccccccc}
    \toprule
    \multicolumn{1}{c}{\emph{method}}  &\multicolumn{20}{c}{\emph{k-plex size}} \\
    \midrule
    &\emph{k} &  \multicolumn{10}{c}{2} & \multicolumn{10}{c}{4} \\
    \cmidrule{2-22}
    &\emph{lb} &  \multicolumn{2}{c}{5} & \multicolumn{2}{c}{10} & \multicolumn{2}{c}{20} & \multicolumn{2}{c}{30}  & \multicolumn{2}{c}{40} & \multicolumn{2}{c}{5} & \multicolumn{2}{c}{10} & \multicolumn{2}{c}{20} & \multicolumn{2}{c}{30}  & \multicolumn{2}{c}{40}\\
    \cmidrule{2-22}
     &\emph{time limit} & \multicolumn{1}{c}{10(s)} & \multicolumn{1}{c}{60(s)}  & \multicolumn{1}{c}{10(s)} & \multicolumn{1}{c}{60(s)} &  \multicolumn{1}{c}{10(s)} & \multicolumn{1}{c}{60(s)} &  \multicolumn{1}{c}{10(s)} & \multicolumn{1}{c}{60(s)}  & \multicolumn{1}{c}{10(s)} & \multicolumn{1}{c}{60(s)}  & \multicolumn{1}{c}{10(s)} & \multicolumn{1}{c}{60(s)} &  \multicolumn{1}{c}{10(s)} & \multicolumn{1}{c}{60(s)} &  \multicolumn{1}{c}{10(s)} & \multicolumn{1}{c}{60(s)}  & \multicolumn{1}{c}{10(s)} & \multicolumn{1}{c}{60(s)}  & \multicolumn{1}{c}{10(s)} & \multicolumn{1}{c}{60(s)}   \\
    \cmidrule{2-22}
    FastEnum && 0 & 0 & 23 & 0 & 0  & \textbf{7} & 0 & 23 & 0 & 0 & 0 &0 &0 &0 &0 & 0 &0 &0 &0 &0 \\
Enum && 0 & 0 & 0 & 0 & 0  & \textbf{7} & 0 & 0 & 0 & 0 & 0 &0 &0 &0 &0 &0 &0 &0 &0 &0 \\
Maplex && 0 & 0 & 0 & 0 & 0  & 0 & 14 & \textbf{26} & 0 & 0 &0 &0 &0 &0 &0 &0 &\textbf{11} &0 &0 &0 \\
Basic BnB &&0  & 0 & \textbf{26} & 0 & 0  & 0 & 0 & \textbf{26} & 0 & 0 & 0 &0 &0 &0 &0 &0 &0 &0 &0 &0 \\
Learn BnB && \textbf{5} & \textbf{23} & \textbf{26} & 0 & 0  &5  & \textbf{23} & \textbf{26} & 0 & 0 & \textbf{6} & \textbf{11} &0 &0 &0 & \textbf{6} & \textbf{11} &0 &0 &0 \\
    \bottomrule
  \end{tabular}}
\end{table*}

\begin{table*}
\centering
  \caption{The experimental results of collaboration network AstroPh. The value is the maximize k-plex size found by each methods under different value of \emph{k}, \emph{lb } and \emph{t}. Larger values represent better performance.}
\label{tab:astroph}
  \resizebox{17cm}{!}{%
\def\arraystretch{1.2}
  \begin{tabular}{cccccccccccccccccccccc}
    \toprule
    \multicolumn{1}{c}{\emph{method}}  &\multicolumn{20}{c}{\emph{k-plex size}} \\
    \midrule
    &\emph{k} &  \multicolumn{10}{c}{2} & \multicolumn{10}{c}{4} \\
    \cmidrule{2-22}
    &\emph{lb} &  \multicolumn{2}{c}{5} & \multicolumn{2}{c}{10} & \multicolumn{2}{c}{20} & \multicolumn{2}{c}{30}  & \multicolumn{2}{c}{40} & \multicolumn{2}{c}{5} & \multicolumn{2}{c}{10} & \multicolumn{2}{c}{20} & \multicolumn{2}{c}{30}  & \multicolumn{2}{c}{40}\\
    \cmidrule{2-22}
     &\emph{time limit} & \multicolumn{1}{c}{10(s)} & \multicolumn{1}{c}{60(s)}  & \multicolumn{1}{c}{10(s)} & \multicolumn{1}{c}{60(s)} &  \multicolumn{1}{c}{10(s)} & \multicolumn{1}{c}{60(s)} &  \multicolumn{1}{c}{10(s)} & \multicolumn{1}{c}{60(s)}  & \multicolumn{1}{c}{10(s)} & \multicolumn{1}{c}{60(s)}  & \multicolumn{1}{c}{10(s)} & \multicolumn{1}{c}{60(s)} &  \multicolumn{1}{c}{10(s)} & \multicolumn{1}{c}{60(s)} &  \multicolumn{1}{c}{10(s)} & \multicolumn{1}{c}{60(s)}  & \multicolumn{1}{c}{10(s)} & \multicolumn{1}{c}{60(s)}  & \multicolumn{1}{c}{10(s)} & \multicolumn{1}{c}{60(s)}   \\
    \cmidrule{2-22}
    FastEnum && \textbf{20} & 20 & 21 & 44 & \textbf{56}  &\textbf{20}  & 20 & 21 & 44 & 56 & 0 & 0 &0 &0 &0 & 0 & 0 &0 &0 &0 \\
    Enum && 0 & 0 & 0 & 0 & 0  &\textbf{20}  & 20 & 20 & 0 & 0 & 0 & 0 &0 &0 &0 & 0 & 0 &0 &0 &0 \\
    Maplex && 0 & 0 & 0 & 0 & 0  &0  & 0 & 0 & 35 & 0 & 0 & 0 &0 &0 &0 & 0 & 0 &0 &38 &0 \\
    Basic BnB && 0 & 0 & 0 & 0 & 0  &0  & 0 & 0 & 0 & 0 & 0 & 0 &0 &0 &0 & 0 & 0 &0 &0 &0 \\
    Learn BnB && 16 &\textbf{22} & \textbf{50} & \textbf{56} & 53  &16  & \textbf{22} & \textbf{48} & \textbf{56} & 53 & \textbf{39} & \textbf{16} &\textbf{56} &\textbf{41} &\textbf{53} & \textbf{53} & \textbf{16} &\textbf{56} &\textbf{41} &\textbf{53} \\
    \bottomrule
  \end{tabular}}
\end{table*}

\section{Conclusion}
In this paper, we propose the learn to bound framework for detecting k-plex in a short time, we take the concept of constraint learning automated learning the bound strategy, which successfully accelerates the branch and bound algorithm. The experiments show our method generalizes on networks with different properties and also works well under different conditions. The learn to bound strategy can be applied to any branch and bound algorithm with the appropriate framework, which provides a new version of solving the combinatorial optimization problems.

%\section{Acknowledgments}

%%
%% The acknowledgments section is defined using the "acks" environment
%% (and NOT an unnumbered section). This ensures the proper
%% identification of the section in the article metadata, and the
%% consistent spelling of the heading.
%\begin{acks}
%To Robert, for the bagels and explaining CMYK and color spaces.
%\end{acks}

%%
%% The next two lines define the bibliography style to be used, and
%% the bibliography file.
\bibliographystyle{ACM-Reference-Format}
\bibliography{main}

%%
%% If your work has an appendix, this is the place to put it.
\end{document}